\newcommand{\NPclassbase}{{\sf\bf NP}}
\newcommand{\hard}{\text{-hard}}
\newcommand{\NPhard}{\NPclassbase\hard\xspace}
\DeclareMathOperator{\RUN}{run}
\DeclareMathOperator{\AV}{Av}
\DeclareMathOperator{\Avd}{Av}
\newcommand\Av[2]{\Avd_{{#1}}({#2})}
\newcommand{\ptext}{\pi}
\newcommand{\bijection}{B}
\newcommand{\ppattern}{\sigma}
\newcommand\BV[2]{\genfrac{}{}{0pt}{}{#1}{#2}}
\newcounter{num}
\DeclareMathOperator{\LMEi}{LMEi}
\DeclareMathOperator{\factor}{factor}
\DeclareMathOperator{\PMa}{PM}
\newcommand{\PM}[6]{\PMa_{{#1}}^{{#2},{#3},{#4}}({#5},{#6})}
\DeclareMathOperator{\lb}{lb}
\DeclareMathOperator{\ub}{ub}
\DeclareMathOperator{\LMa}{LM}
\newcommand{\LM}[4]{\LMa_{{#1}}^{{#2}}(#3,#4)}
\DeclareMathOperator{\AFa}{AF}
\newcommand{\AF}[4]{\AFa_{{#1}}^{{#2}}(#3,#4)}
\DeclareMathOperator{\DFa}{DF}
\newcommand{\DF}[4]{\DFa_{{#1}}^{{#2}}(#3,#4)}
\DeclareMathOperator{\LCSa}{LCS}
\newcommand{\LCS}[8]{\LCSa_{{#1},{#2},{#3}}^{{#4},{#5},{#6}}({#7},{#8})}
\DeclareMathOperator{\matcha}{M}
\newcommand{\match}[8]{\matcha_{{#1},{#2},{#3}}^{{#4},{#5},{#6}}({#7},{#8})}
\DeclareMathOperator{\SETa}{S}
\newcommand{\SET}[4]{\SETa_{{#1}}^{{#2}}({#3},{#4})}
\begin{document}

%%%%%%%%%%%%%%%%%%%%%%%%%%%%%%%%%%%%%%%%%%%%%%%%%%%%%%%%%%%%%%%%%%%%%%%%%%%%%%%%

\title{Pattern matching in\\ $(213,231)$-avoiding permutations}
\author{Both Emerite NEOU, Romeo RIZZI}
\date{}

\author{%
	Both Emerite Neou\thanks{On a Co-tutelle  Agreement with the Department of Mathematics of the University of Trento}\inst{1}
  Romeo Rizzi\inst{2} \and
  St\'ephane Vialette\inst{1}
}% end author
\institute{%
	Universit\'e Paris-Est, LIGM (UMR 8049), CNRS, UPEM, ESIEE Paris, ENPC,
	F-77454, Marne-la-Vallée, France\\
  \email{\{neou,vialette\}@univ-mlv.fr}
  \and
  Department of Computer Science,
  Università degli Studi di Verona, Italy \\
  \email{romeo.rizzi@univr.it}
}% end institute

\date{\today}

\maketitle

\begin{abstract}
	Given permutations $\sigma \in S_k$ and $\pi \in S_n$ with $k<n$, the
  \emph{pattern matching} problem is to decide whether $\pi$ matches
  $\sigma$ as an order-isomorphic subsequence.
	We give a linear-time algorithm in case both $\pi$ and $\sigma$ avoid
	the two size-$3$ permutations $213$ and $231$.
	For the special case where only $\sigma$ avoids $213$ and $231$, we present a
	$O(max(kn^2,n^2\log(\log(n)))$ time algorithm. 
	We extend our research to bivincular patterns that avoid $213$ and $231$ and present a $O(kn^4)$ time algorithm.
	Finally we look at the related problem of the longest subsequence which avoids $213$ and $231$.  
\end{abstract}

%%%%%%%%%%%%%%%%%%%%%%%%%%%%%%%%%%%%%%%%%%%%%%%%%%%%%%%%%%%%%%%%%%%%%%%%%%%%%%%%

\section{Introduction}
\label{section:Introduction}

	A permutation $\pi$ is said to match another permutation $\sigma$,
	in symbols $\sigma \preceq \pi$,
	if there exists a subsequence of elements of $\pi$ that has the same relative
	order as $\sigma$.
	Otherwise, $\pi$ is said to \emph{avoid} the permutation $\sigma$.
	For example a permutation matches the pattern $123$ (resp. $321$) if it has
	an increasing (resp. decreasing) subsequence of length $3$.
	As another example,
	$6152347$ matches $213$ but not $231$.
	During the last decade, the study of the pattern matching on permutations has
	become a very active area of research \cite{Kitaev:book:2011} and
	a whole annual conference (\textsc{Permutation Patterns}) is now devoted
	to this topic.

	We consider here the so-called \emph{pattern matching} problem
	(also sometimes referred to as the \emph{pattern involvement problem}):
	Given two permutations $\sigma$ and $\pi$, this problem is to decide whether
	$\sigma \preceq \pi$ (the problem is ascribed to Wilf in \cite{Bose:Buss:Lubiw:1998}).
	The permutation matching problem is known to be \NPhard~\cite{Bose:Buss:Lubiw:1998}.
	It is, however, polynomial-time solvable by brute-force enumeration
	if $\sigma$ has bounded size.
	Improvements to this algorithm were presented in
	\cite{Albert:Aldred:Atkinson:Holton:ISAAC:2001} and
	\cite{Ahal:Rabinovich:2008},
	the latter describing a $O(|\pi|^{0.47|\sigma|+o(|\sigma|)})$ time algorithm.
	Bruner and Lackner \cite{DBLP:journals/corr/abs-1204-5224}
	gave a fixed-parameter algorithm solving the pattern matching problem with
	an exponential worst-case runtime of $O(1.79^{\RUN(\pi)})$,
	where $\RUN(\pi)$ denotes the number of alternating runs of $\pi$.
	(This is an improvement upon the $O(2^{|\pi|})$ runtime required by
	brute-force search without imposing restrictions on $\sigma$ and $\pi$.)
	A recent major step was taken by Marx and Guillemot
	\cite{Guillemot:Marx:SODA:2014}.
	They showed that
	the permutation matching problem is fixed-parameter tractable (FPT) for
	parameter $|\sigma|$.

	A few particular cases of the pattern matching problem have been attacked successfully.
	The case of increasing patterns is solvable in
	$O(|\pi| \log \log |\sigma|)$ time in the RAM model \cite{Crochemore:Porat:2010},
	improving the previous 30-year bound of $O(|\pi| \log |\sigma|)$.
	% (The algorithm also improves on the previous
	% $O(|\pi| \log \log |\pi|)$ bound.)
	Furthermore, the patterns $132$, $213$, $231$, $312$ can all be handled in linear-time
	by stack sorting algorithms.
	Any pattern of length $4$ can be detected in $O(|\pi| \log |\pi|)$ time
	\cite{Albert:Aldred:Atkinson:Holton:ISAAC:2001}.
	Algorithmic issues for $321$-avoiding patterns matching has been investigated in
	\cite{Guillemot:Vialette:ISAAC:2009}.
	The pattern matching problem is also solvable in
	polynomial-time for separable patterns \cite{Ibarra:1997,Bose:Buss:Lubiw:1998}
	(see also \cite{Bouvel:Rossin:Vialette:CPM:2007} for LCS-like issues
	of separable permutations).
	Separable permutations are those permutations that match neither
	$2413$ nor $3142$, and they are enumerated by the Schr{\"o}der numbers
%	(sequence A006318 in OEIS).
%	To the best of our knowledge,
%	separable permutations first arose in the work of
%	Avis and Newborn~\cite{Avis:Newborn:1981},
%	who showed that they are precisely the permutations which can be sorted by an
%	arbitrary number of pop-stacks in series,
%	where a pop-stack is a restricted form of stack in which any pop operation
%	pops all items at once.
	(Notice that the separable permutations include as a special case the
	stack-sortable permutations, which avoid the pattern $231$.)

	There exists many generalisation of patterns that are worth considering
	in the context of algorithmic issues in pattern matching
	(see \cite{Kitaev:book:2011} for an up-to-date survey).
	\emph{Vincular patterns}, also called
	\emph{generalized patterns},
	resemble (classical) patterns with the additional constraint that some of the elements in
	a matching must be consecutive in postitions.
	Of particular importance in our context,
	Bruner and Lackner \cite{DBLP:journals/corr/abs-1204-5224}
	proved that deciding whether a vincular pattern
	$\sigma$ of length $k$ can be match to a longer permutation
	$\pi$ is $W[1]$-complete for
	parameter $k$;
	for an up to date survey of the $W[1]$ class and related material, see
	\cite{Downey:Fellows:2013}.
	\emph{Bivincular patterns} generalize classical patterns even further
	than vincular
	patterns by adding a constraint on values.

	We focus in this paper on pattern matching issues for
	$(213,231)$-avoiding permutations
    (\emph{i.e.}, those permutations that avoid both $213$ and $231$).
	The number of $n$-permutations that avoid both
	$213$ and $231$ is
	$t_0 = 1$ for $n = 0$ and
	$t_n =2^{n-1}$ for $n\geq 1$ \cite{Simion:Schmidt:EJC:1985}.
	On an individual basis,
	the permutations that do not match the permutation pattern $231$
	are exactly the \emph{stack-sortable permutations} and they are counted by
	the Catalan numbers \cite{Knuth:1997:ACP:260999}.
	A stack-sortable permutation is a permutation whose elements may be sorted by
	an algorithm whose internal storage is limited to a single stack data structure.
	As for $213$, it is well-known that
 	if $\pi = \pi_1\pi\,\ldots\,\pi_n$ avoids $132$, then its complement
 	$\pi' = (n+1-\pi_1)(n+1-\pi_2)\,\ldots\,(n+1-\pi_n)$ avoids $312$, and
 	the reverse of $\pi'$ avoids $213$.
 	% From a combinatorial point of view,
 	% B\'ona \cite{Bona:ElJC:2012}
 	% showed the rather surprising fact that the cumulative number of
 	% occurrences of the classical patterns $231$ and $213$ are the same on the
 	% set of permutations avoiding $132$,
 	% beside the pattern based statistics $231$ and $213$
 	% do not have the same distribution on this set.
 	% \emph{Almost avoidance} has also been considered;
 	% a permutation $\pi$ almost avoids a set $X$ of permutations
 	% if there is a way to remove a single element of $\pi$ to get a permutation
 	% that avoids all elements in $P$
 	% and $L_n(P)$ denotes the set of permutations of length $n$ that almost avoid
 	% $P$.
 	% It is shown in \cite{Griffiths:Smith:Warren:PMA:2011} that
 	% $L_n(213, 231) = L_n(132, 231) = L_n(132, 312) = L_n(213, 312)$.
	This paper is organized as follows.
	In Section~\ref{section:Definitions} the needed definitions are presented.
	Section~\ref{section:both are (213,231)-avoiding} is devoted to presenting
	an online linear-time algorithm in case both 
	permutations are $(213,231)$-avoiding,
	whereas Section~\ref{section:sigma only avoid 231 and 213} focuses on the case
	where only the pattern is $(213,231)$-avoiding.
	In Section~\ref{section:bivincular} we give a polynomial-time algorithm
	for $(213,231)$-avoiding bivincular patterns.
	In Section~\ref{section:LCS} we consider the problem of finding the longest
	$(213,231)$-avoiding pattern in permutations.

%%%%%%%%%%%%%%%%%%%%%%%%%%%%%%%%%%%%%%%%%%%%%%%%%%%%%%%%%%%%%%%%%%%%%%%%%%%%%%%%

\section{Definitions}
\label{section:Definitions}

A \emph{permutation} of length $n$ is a one-to-one function from an
$n$-element set to itself.
We write permutations as words
$\pi = \pi_1\pi_2\,\ldots\,\pi_n$, whose elements are distinct
and usually consist of the integers $12\,\ldots\,n$, and we let
$\pi[i]$ stands for $\pi_i$.
For the sake of convenience, we let
$\pi[i:j]$ stand for
$\pi_i\pi_{i+1}\,\ldots\,\pi_j$,
$\pi[:j]$ stand for $\pi[1:j]$ and
$\pi[i:]$ stand for $\pi[i:n]$.
As usual, we let $S_n$ denote the set of all permutations of length $n$.
It is convenient to use a geometric representation of permutation
to ease the understanding of algorithms. 
The geometric representation corresponds to the set of point with coordinate $(i,\pi[i])$
(see figure \ref{example: pattern matching}).

% The \emph{reduced form} of a permutation $\sigma$ on a set
% $\{j_1, j_2, \ldots, j_k\}$ where
% $j_1 < j_2 < \ldots < j_n$ is the permutation $\sigma'$
% obtained by renaming the letters of $\sigma$ so that
% $j_i$ is renamed $i$ for all $1 \leq i \leq k$.
% We let $\RED(\sigma)$ denote the reduced form of $\sigma$.
% For example $\RED(453) = 231$
% while $\RED(3174) = 2143$.

A permutation $\pi$ is said to \emph{match} the permutation $\sigma$
if there exists a subsequence of (not necessarily consecutive)
element of $\pi$ that has the same relative order as $\sigma$,
and in this case $\pi$ is said to match $\sigma$, 
written $\sigma \preceq \pi$.
Otherwise, $\pi$ is said to \emph{avoid} the permutation $\sigma$.
For example, the permutation $\pi = 391867452$
matches the pattern $\sigma = 51342$,
as can be seen in the highlighted subsequence of
$\pi = 3\mathbf{9}\mathbf{1}8\mathbf{6}\mathbf{7}\mathbf{4}52$
(or
$\pi = 3\mathbf{9}\mathbf{1}8\mathbf{6}\mathbf{7}4\mathbf{5}2$
or
$\pi = 3\mathbf{9}\mathbf{1}8\mathbf{6}\mathbf{7}45\textbf{2}$
or
$\pi = 3\mathbf{9}\mathbf{1}867\textbf{4}\textbf{5}\mathbf{2}$
).
Each subsequence $91674$,
$91675$,
$91672$,
$91452$,
 in $\pi$ is called a
\emph{matching}
of $\sigma$.
Since the permutation $\pi = 391867452$  contains no increasing subsequence of
length four, $\pi$ avoids $1234$.
Geometrically, $\pi$ matches $\sigma$ if there exists
a set of point in $\pi$ that is isomorph to the set of point of $\sigma$. In other word,
if there exists a set of point in $\pi$ with the same disposition as the set of point of $\sigma$, without regard to the distance (see figure \ref{example: pattern matching}).

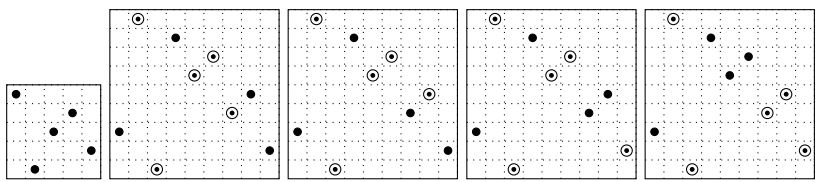
\begin{figure}
	\centering
		\begin{tikzpicture}[scale=.25]
		\draw (0,0) -- (5,0) -- (5,5) -- (0,5) -- cycle;
		\foreach \i in {1, 2, 3, 4, 5}{
			\draw[dotted] (0,\i) -- (5, \i);
			\draw[dotted] (\i,0) -- (\i, 5);
		}
		\draw[fill=black] (0.5,4.5) circle (2mm);
		\draw[fill=black] (1.5,0.5) circle (2mm);
		\draw[fill=black] (2.5,2.5) circle (2mm);
		\draw[fill=black] (3.5,3.5) circle (2mm);
		\draw[fill=black] (4.5,1.5) circle (2mm);
		\end{tikzpicture} 
		\begin{tikzpicture}[scale=.25]
		\draw (0,0) -- (9,0) -- (9,9) -- (0,9) -- cycle;
		\foreach \i in {1, 2, 3, 4, 5, 6, 7, 8, 9}{
			\draw[dotted] (0,\i) -- (9, \i);
			\draw[dotted] (\i,0) -- (\i, 9);
		}
		\draw[fill=black] (0.5,2.5) circle (2mm);
		\draw[fill=black,double, double distance=1pt] (1.5,8.5) circle (2mm);
		\draw[fill=black,double, double distance=1pt] (2.5,0.5) circle (2mm);
		\draw[fill=black] (3.5,7.5) circle (2mm);
		\draw[fill=black,double, double distance=1pt] (4.5,5.5) circle (2mm);
		\draw[fill=black,double, double distance=1pt] (5.5,6.5) circle (2mm);
		\draw[fill=black,double, double distance=1pt] (6.5,3.5) circle (2mm);
		\draw[fill=black] (7.5,4.5) circle (2mm);
		\draw[fill=black] (8.5,1.5) circle (2mm);
		\end{tikzpicture}
		\begin{tikzpicture}[scale=.25]
		\draw (0,0) -- (9,0) -- (9,9) -- (0,9) -- cycle;
		\foreach \i in {1, 2, 3, 4, 5, 6, 7, 8, 9}{
			\draw[dotted] (0,\i) -- (9, \i);
			\draw[dotted] (\i,0) -- (\i, 9);
		}
		\draw[fill=black] (0.5,2.5) circle (2mm);
		\draw[fill=black,double, double distance=1pt] (1.5,8.5) circle (2mm);
		\draw[fill=black,double, double distance=1pt] (2.5,0.5) circle (2mm);
		\draw[fill=black] (3.5,7.5) circle (2mm);
		\draw[fill=black,double, double distance=1pt] (4.5,5.5) circle (2mm);
		\draw[fill=black,double, double distance=1pt] (5.5,6.5) circle (2mm);
		\draw[fill=black] (6.5,3.5) circle (2mm);
		\draw[fill=black,double, double distance=1pt] (7.5,4.5) circle (2mm);
		\draw[fill=black] (8.5,1.5) circle (2mm);
		\end{tikzpicture}
		\begin{tikzpicture}[scale=.25]
		\draw (0,0) -- (9,0) -- (9,9) -- (0,9) -- cycle;
		\foreach \i in {1, 2, 3, 4, 5, 6, 7, 8, 9}{
			\draw[dotted] (0,\i) -- (9, \i);
			\draw[dotted] (\i,0) -- (\i, 9);
		}
		\draw[fill=black] (0.5,2.5) circle (2mm);
		\draw[fill=black,double, double distance=1pt] (1.5,8.5) circle (2mm);
		\draw[fill=black,double, double distance=1pt] (2.5,0.5) circle (2mm);
		\draw[fill=black] (3.5,7.5) circle (2mm);
		\draw[fill=black,double, double distance=1pt] (4.5,5.5) circle (2mm);
		\draw[fill=black,double, double distance=1pt] (5.5,6.5) circle (2mm);
		\draw[fill=black] (6.5,3.5) circle (2mm);
		\draw[fill=black] (7.5,4.5) circle (2mm);
		\draw[fill=black,double, double distance=1pt] (8.5,1.5) circle (2mm);
		\end{tikzpicture}		
		\begin{tikzpicture}[scale=.25]
		\draw (0,0) -- (9,0) -- (9,9) -- (0,9) -- cycle;
		\foreach \i in {1, 2, 3, 4, 5, 6, 7, 8, 9}{
			\draw[dotted] (0,\i) -- (9, \i);
			\draw[dotted] (\i,0) -- (\i, 9);
		}
		\draw[fill=black] (0.5,2.5) circle (2mm);
		\draw[fill=black,double, double distance=1pt] (1.5,8.5) circle (2mm);
		\draw[fill=black,double, double distance=1pt] (2.5,0.5) circle (2mm);
		\draw[fill=black] (3.5,7.5) circle (2mm);
		\draw[fill=black] (4.5,5.5) circle (2mm);
		\draw[fill=black] (5.5,6.5) circle (2mm);
		\draw[fill=black,double, double distance=1pt] (6.5,3.5) circle (2mm);
		\draw[fill=black,double, double distance=1pt] (7.5,4.5) circle (2mm);
		\draw[fill=black,double, double distance=1pt] (8.5,1.5) circle (2mm);
		\end{tikzpicture}

	\caption[Example pattern matching]{
		The pattern $\sigma=51342$
		and four matchings of $\sigma$ in $391867452$.} 
	\label{example: pattern matching}
\end{figure}

Suppose $P$ is a set of permutations. We let $\AV_n(P)$ denote the
set of all $n$-permutations avoiding each permutation in $P$.
For the sake of convenience
(and as it is customary~\cite{Kitaev:book:2011}), we omit $P$'s braces thus having
e.g. $\AV_n(213,231)$ instead of
$\AV_n(\{213,231\})$.
If $\pi \in \AV_n(P)$, we also say that $\pi$ is
\emph{$P$-avoiding}.
% A basic example is if
% $\pi = \pi_1\pi_2\,\ldots\,\pi_n \in \AV_n(321)$, i.e.,
% has no decreasing subsequence of length $3$, then its reverse,
% $\pi' = \pi_n\pi_{n-1}\,\ldots\,\pi_1$ avoids $123$, i.e.,
% has no increasing subsequence of length $3$.

An \emph{ascent} of a permutation $\pi \in S_n$ is any element
$1 \leq i < n$ where the following value is bigger than the current one.
That is, if $\pi = \pi_1\pi_2\,\ldots\,\pi_n$, then
$\pi_i$ is an ascent if $\pi_i < \pi_{i+1}$.
For example, the permutation
$3452167$ has ascents $3$, $4$, $1$ and $6$.
Similarly, a \emph{descent} is any element
$1 \leq i < n$ with $\pi_i > \pi_{i+1}$,
so for every $1 \leq i < n$, $\pi_i$  is either an ascent or is a descent of
$\pi$.
%To clarify the exposition,
%we let $\ustep$ and $\dstep$ denote an ascend and a descend, respectively.
%The \emph{stripe} $s_\pi$ of a permutation $\pi \in S_n$ is the word
%$\stripe{\pi}{1} \stripe{\pi}{2} \ldots \stripe{\pi}{n-1} \in \{\ustep,\dstep\}^{n-1}$
%defined by
%$\stripe{\pi}{i}= \ustep$ if $i$ an ascent in $\pi$ and
%$\stripe{\pi}{i} = \dstep$ if $i$ a descent in $\pi$.
%For example the stripe of the permutation
%$\pi = 981234765$
%is $\stripew{\pi} = \dstep\dstep\ustep\ustep\ustep\ustep\dstep\dstep$.

A $\textit{left to right maxima}$ (abbreviate LRMax) of $\pi$ is a element that does not have any element bigger than it on its right (see fig. $\ref{fig:left to right maxima}$). Formally, $\pi[i]$ is a LRMax if and only if $\pi[i]$ is the biggest element of $\pi[i:]$. Similarly $\pi[i:]$ is a $\textit{left to right minima}$ (abbreviate LRMin) if and only if $\pi[i]$ is the smallest element of $\pi[i:]$.

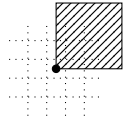
\begin{figure}
	\centering
	\begin{tikzpicture}[scale=.25]
	\foreach \i in {1, 2, 3, 4}{
		\draw[dotted] (0,\i) -- (5, \i);
		\draw[dotted] (\i,0) -- (\i, 5);
	}
	\draw[fill=black] (2.5,2.5) circle (2mm);
	%\draw (2.5,2.5) -- (2.5,6.5);
	%\draw (2.5,2.5) -- (6,2.5);
	
	\draw [pattern=north east lines] (2.5,2.5) rectangle (6,6);
	
	\end{tikzpicture}

	\caption[Example pattern matching]{
		The element is a LRMax if and only if the dashed area is empty.} 
	\label{fig:left to right maxima}
\end{figure}

A \emph{bivincular pattern} $\sigma$
of length $k$ is a permutation in $S_k$ written in
two-line notation
(that is the top row is $12\,\ldots\,k$ and the bottom row
is a permutation $\sigma_1\sigma_2\,\ldots\,\sigma_k$).
We have the following conditions on the top and bottom rows
of $\sigma$, as see in \cite{Kitaev:book:2011} in definition 1.4.1:
\begin{itemize}
	\item
	If the bottom line of $\sigma$ contains
	$\underline{\sigma_i\sigma_{i+1}\,\ldots\,\sigma_j}$
	then the elements corresponding to
	$\sigma_i\sigma_{i+1}\,\ldots\,\sigma_j$ in a matching of
	$\sigma$ in $\pi$ must be adjacent, whereas there is
	no adjacency condition for
	non-underlined consecutive elements.
	Moreover if the bottom row of $\sigma$ begins with
	$_\llcorner{\sigma_1}$ then any matching of $\sigma$
	in a permutation $\pi$ must begin with the leftmost
	element of $\pi$,
	and
	if the bottom row of $\sigma$ begins with
	${\sigma_k}_\lrcorner$ then any matching of $\sigma$
	in a permutation $\pi$ must end with the rightmost
	element of $\pi$.
	\item
	If the top line of $\sigma$ contains
	$\overline{i\,i+1\,\ldots\,j}$ then the elements corresponding to
	$i, i+1, \ldots, j$ in an
	matching of $\sigma$ in $\pi$ must be adjacent in values,
	whereas there is no value adjacency restriction for non-overlined
	elements.
	Moreover, if the top row of $\sigma$ begins with
	$^\ulcorner{1}$ then
	any matching of $\sigma$ is a permutation $\pi$ must contain 
	the smallest element of $\pi$, and
	if top row of $\sigma$ ends with $k^\urcorner$ then
	any matching of $\sigma$ is a permutation $\pi$ must contain
	the largest element of $\pi$.
	
\end{itemize}

For example,
let
$\sigma = \BV{1\overline{23}4\urcorner}{\llcorner 21\underline{43}  }$.
In $3217845$, $\textbf{32}17\textbf{84}5$ is a matching of $\sigma$ but
 $3\textbf{21}\textbf{7}8\textbf{4}5$ is not.
The best general reference is \cite{Kitaev:book:2011}.

Geometrically, We represent underlined and overlined elements by \textit{forbidden areas}.
A vertical area between two points indicates that the two matching of those points must be consecutive in positions, whereas a horizontal area between two points indicates that the two  matching of those points must be consecutive in value. The forbidden areas can be understand as follow : in a matching, the forbidden areas must be empty. Thus, 
$\pi$ matches a bivincular pattern $\sigma$ if there exists a set of point in $\pi$ that is isomorph to $\sigma$ and if
the forbidden areas are empty.
(see figure \ref{example:bivincular pattern matching}).

\begin{figure}[t] 
	\centering

    	\begin{tikzpicture}[scale=.5]
    	\draw [pattern=north east lines] (0,1.5) rectangle (4,2.5);
    	\draw [pattern=north east lines] (0,3.5) rectangle (4,4);
    	
    	\draw [pattern=north west lines] (0,0) rectangle (0.5,4);
    	\draw [pattern=north west lines] (2.5,0) rectangle (3.5,4);

    	\draw (0,0) -- (4,0) -- (4,4) -- (0,4) -- cycle;
    	\foreach \i in {1, 2, 3, 4}{
    		\draw[dotted] (0,\i) -- (4, \i);
    		\draw[dotted] (\i,0) -- (\i, 4);
    	}
    	\draw[fill=black] (0.5,1.5) circle (2mm);
    	\draw[fill=black] (1.5,0.5) circle (2mm);
    	\draw[fill=black] (2.5,3.5) circle (2mm);
    	\draw[fill=black] (3.5,2.5) circle (2mm);
    	\end{tikzpicture}
    	\quad
    	\begin{tikzpicture}[scale=.5]
    	\draw [pattern=north east lines] (0,6.5) rectangle (7,7);
    	\draw [pattern=north east lines] (0,2.5) rectangle (7,3.5);
    	
    	\draw [pattern=north west lines] (0,0) rectangle (0.5,7);
    	\draw [pattern=north west lines] (4.5,0) rectangle (5.5,7);
    	
    	%$3217845$ 
    	
    	\draw (0,0) -- (7,0) -- (7,7) -- (0,7) -- cycle;
    	\foreach \i in {1, 2, 3, 4, 5, 6, 7}{
    		\draw[dotted] (0,\i) -- (7, \i);
    		\draw[dotted] (\i,0) -- (\i, 7);
    	}
    	
		\draw[fill=black,double, double distance=1pt] (0.5,2.5) circle (2mm);
		\draw[fill=black,double, double distance=1pt] (1.5,1.5) circle (2mm);
		\draw[fill=black] (2.5,0.5) circle (2mm);
		\draw[fill=black] (3.5,5.5) circle (2mm);
		\draw[fill=black,double, double distance=1pt] (4.5,6.5) circle (2mm);
		\draw[fill=black,double, double distance=1pt] (5.5,3.5) circle (2mm);
		\draw[fill=black] (6.5,4.5) circle (2mm);	
    	
    	\end{tikzpicture}  
    	\quad
    	\begin{tikzpicture}[scale=.5]
    	\draw [pattern=north east lines] (0,5.5) rectangle (7,7);
    	\draw [pattern=north east lines] (0,2.5) rectangle (7,3.5);
    	
    	\draw [pattern=north west lines] (0,0) rectangle (1.5,7);
    	\draw [pattern=north west lines] (3.5,0) rectangle (5.5,7);
    	
    	%$3217845$ 
    	
    	\draw (0,0) -- (7,0) -- (7,7) -- (0,7) -- cycle;
    	\foreach \i in {1, 2, 3, 4, 5, 6, 7}{
    		\draw[dotted] (0,\i) -- (7, \i);
    		\draw[dotted] (\i,0) -- (\i, 7);
    	}
    	
    	\draw[fill=black] (0.5,2.5) circle (2mm);
    	\draw[fill=black,double, double distance=1pt] (1.5,1.5) circle (2mm);
    	\draw[fill=black,double, double distance=1pt] (2.5,0.5) circle (2mm);
    	\draw[fill=black,double, double distance=1pt] (3.5,5.5) circle (2mm);
    	\draw[fill=black] (4.5,6.5) circle (2mm);
    	\draw[fill=black,double, double distance=1pt] (5.5,3.5) circle (2mm);
    	\draw[fill=black] (6.5,4.5) circle (2mm);	
    	
    	\end{tikzpicture}

    	\caption[Example pattern matching]{
    		From left to right,
    		the bivincular pattern $\sigma = \BV{1\overline{23}4\urcorner}{\llcorner 21\underline{43}  }$, A matching of $\sigma$ in $3216745$, A matching of $2143$ in $3216745$ but not a matching of $\sigma$ in $3216745$ because the point $(1,3)$ and $(5,7)$ are in the forbidden area.} 
    	\label{example:bivincular pattern matching}
\end{figure}

%%%%%%%%%%%%%%%%%%%%%%%%%%%%%%%%%%%%%%%%%%%%%%%%%%%%%%%%%%%%%%%%%%%%%%%%%%%%%%%%

\section{Both $\pi$ and $\sigma$ are $(213,231)$-avoiding}
\label{section:both are (213,231)-avoiding}

This section is devoted to presenting a fast algorithm for deciding if
$\sigma \preceq \pi$
in case both $\pi$ and $\sigma$ are $(213,231)$-avoiding.
We begin with an easy but crucial structure lemma.

\begin{lemma}[Folklore]
\label{lemma:first element is 1 or n}
The first element of any $(213,231)$-avoiding permutation
must be either the minimal or the maximal element.
\end{lemma}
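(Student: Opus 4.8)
The plan is to argue by contradiction, using nothing beyond the definition of pattern matching. Suppose $\pi = \pi_1\pi_2\cdots\pi_n \in \AV_n(213,231)$ and that $\pi_1$ is neither the minimal element $1$ nor the maximal element $n$. Since $\pi$ is a permutation of $\{1,\dots,n\}$, this means there is at least one entry strictly smaller than $\pi_1$ and at least one entry strictly larger than $\pi_1$; and because $\pi_1$ is the very first entry, both of these occur at positions greater than $1$. So I can fix indices $i,j>1$ with $\pi_i < \pi_1 < \pi_j$.

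Next I would split on the relative order of $i$ and $j$, which is exhaustive since $i \neq j$. If $i<j$, then the three entries at positions $1,i,j$ appear in the order (middle value, smallest value, largest value), which is precisely an occurrence of the pattern $213$. If instead $j<i$, then the entries at positions $1,j,i$ appear in the order (middle value, largest value, smallest value), which is an occurrence of the pattern $231$. In either case $\pi$ matches one of the two excluded patterns, contradicting $\pi \in \AV_n(213,231)$. Hence $\pi_1$ must be $1$ or $n$.

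There is essentially no obstacle here: the only point to check is that the two sub-cases genuinely cover all possibilities, which follows immediately from $i \neq j$, and that the stated relative orders really are the patterns $213$ and $231$, which is a direct reading of the definition. No counting, structural decomposition, or geometric argument is required, so I expect the proof to be a few lines. The one thing I would state carefully is the reduction from ``neither minimal nor maximal'' to ``a smaller and a larger entry both exist to the right of $\pi_1$,'' since this is exactly what lets the forbidden patterns be formed.
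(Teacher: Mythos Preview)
Your proof is correct and follows essentially the same approach as the paper: the paper simply observes that any non-extremal first element would play the role of the `$2$' in a $213$ or $231$ pattern using $1$ and $n$ as the `$1$' and `$3$'. Your version is slightly more verbose in that you allow arbitrary smaller and larger entries and explicitly split on the relative order of $i$ and $j$, but the argument is the same.
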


\begin{proof}[of Lemma~\ref{lemma:first element is 1 or n}]
Any other initial element would serve as a `$2$' in either a
$231$ or $213$ with $1$ and $n$ as the `$1$' and `$3$' respectively.
\qed
\end{proof}

\begin{corollary}
\label{corollary:minmaxelement}
$\pi \in \AV_n(213,231)$ if and only if for $1 \leq i < n$,
$\pi[i]$ is a LRMax or a LRMin.
\end{corollary}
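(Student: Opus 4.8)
The plan is to reduce both implications to Lemma~\ref{lemma:first element is 1 or n}, together with the elementary fact that $(213,231)$-avoidance is hereditary: any subsequence, and in particular any suffix $\pi[i:]$, of a $(213,231)$-avoiding permutation is again $(213,231)$-avoiding, since an occurrence of $213$ or $231$ inside $\pi[i:]$ would already be an occurrence inside $\pi$.

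For the forward direction, suppose $\pi \in \AV_n(213,231)$ and fix $i$ with $1 \leq i < n$. First I would note that the suffix $\pi[i:]$ is $(213,231)$-avoiding by the hereditary remark above. Applying Lemma~\ref{lemma:first element is 1 or n} to (the reduction of) $\pi[i:]$ then says that its first element, which is $\pi[i]$, is either the minimum or the maximum of the values occurring in $\pi[i:]$; that is precisely the statement that $\pi[i]$ is a LRMin or a LRMax. Since $i$ was arbitrary, this yields the claim for every index in the required range.

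For the converse I would argue by contraposition. Assume $\pi$ contains an occurrence of $213$ or $231$, witnessed by positions $i < j < k$. In either pattern the leftmost element plays the role of the ``$2$'': for $213$ the occurrence gives $\pi[j] < \pi[i] < \pi[k]$, and for $231$ it gives $\pi[k] < \pi[i] < \pi[j]$. In both cases the suffix $\pi[i:]$ contains an element strictly smaller than $\pi[i]$ and an element strictly larger than $\pi[i]$, so $\pi[i]$ is neither the maximum nor the minimum of $\pi[i:]$, i.e.\ it is neither a LRMax nor a LRMin. Moreover, since $i < j < k \leq n$ we have $i \leq n-2 < n$, so this violating index lies in the stated range $1 \leq i < n$; this contradicts the hypothesis and completes the contrapositive.

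I do not expect a genuine obstacle here: the corollary is essentially the Folklore lemma localized to every suffix of $\pi$. The only two points meriting care are the hereditary passage of avoidance to suffixes and the observation that the first element of an occurrence of each forbidden pattern is simultaneously dominated and dominating within that occurrence — both immediate from the definitions of $213$, $231$, LRMax and LRMin.
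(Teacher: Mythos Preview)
Your proof is correct and is precisely the argument the paper intends: the corollary is stated without proof, immediately after Lemma~\ref{lemma:first element is 1 or n}, so the expected derivation is exactly to apply that lemma to each suffix $\pi[i:]$ (using that avoidance is hereditary) for the forward direction, and to observe that the leftmost position of any $213$- or $231$-occurrence is neither a suffix maximum nor a suffix minimum for the converse. Nothing to add.
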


\begin{corollary}
\label{corollary:max is ascent}
Let $\pi \in \AV_n(213,231)$ and $1 \leq i < n$. Then,
(1) 
$\pi[i]$ is an ascent element if and only if $\pi[i]$ is a LRMin
and 
(2)
$\pi[i]$ is a descent element if and only if $\pi[i]$ is a LRMax
\end{corollary}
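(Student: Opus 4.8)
The plan is to reduce the whole statement to Corollary~\ref{corollary:minmaxelement} together with the elementary dichotomy, recorded in the definitions, that for $1 \le i < n$ every index $i$ is either an ascent or a descent. First I would dispatch the two ``easy'' directions, which are immediate from the definitions: if $\pi[i]$ is a LRMin then $\pi[i] < \pi[j]$ for all $j > i$, and specialising to $j = i+1$ gives $\pi[i] < \pi[i+1]$, so $\pi[i]$ is an ascent; symmetrically, if $\pi[i]$ is a LRMax then $\pi[i] > \pi[i+1]$, so $\pi[i]$ is a descent. Thus LRMin-indices are a subset of the ascent-indices and LRMax-indices are a subset of the descent-indices.

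The crux is to upgrade these one-way implications to the biconditionals (1) and (2), for which I would run a short partition argument. Fix $1 \le i < n$. Since $\pi[i:]$ contains at least two elements, $\pi[i]$ cannot be simultaneously the maximum and the minimum of $\pi[i:]$, so it cannot be both a LRMin and a LRMax. Combined with Corollary~\ref{corollary:minmaxelement}, which guarantees that $\pi[i]$ is a LRMax or a LRMin, this shows that the LRMin-indices and the LRMax-indices \emph{partition} $\{1,\dots,n-1\}$. The ascent-indices and descent-indices likewise partition $\{1,\dots,n-1\}$. Having shown one partition refines the other block-by-block (LRMin into ascent, LRMax into descent), the two partitions must coincide, yielding both (1) and (2) at once.

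For a cleaner write-up I would probably phrase the converse directions directly by contradiction rather than via partitions: if $\pi[i]$ is an ascent, then by Corollary~\ref{corollary:minmaxelement} it is a LRMax or a LRMin; were it a LRMax, the forward implication just established would force it to be a descent, contradicting $\pi[i] < \pi[i+1]$, so it must be a LRMin, proving (1). The argument for (2) is entirely symmetric, exchanging the roles of maxima/minima and ascent/descent.

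I expect no genuine obstacle here, as the statement is essentially a repackaging of Corollary~\ref{corollary:minmaxelement}. The single point deserving a moment's care is the mutual exclusivity of LRMax and LRMin, which is precisely where the hypothesis $i < n$ (ensuring $\pi[i:]$ has at least two entries) is used: at $i = n$ the element $\pi[n]$ is trivially both a LRMax and a LRMin while being neither an ascent nor a descent, so the restriction $1 \le i < n$ is essential rather than cosmetic.
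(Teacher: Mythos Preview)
Your proposal is correct and follows exactly the line of reasoning the paper intends: the paper states this as an immediate corollary of Corollary~\ref{corollary:minmaxelement} without writing out a proof, and your argument supplies precisely the elementary details (LRMin $\Rightarrow$ ascent and LRMax $\Rightarrow$ descent from the definitions, then the partition/contradiction step using Corollary~\ref{corollary:minmaxelement} and the hypothesis $i<n$) that are being left implicit.
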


Lemma $\ref{corollary:max is ascent}$ gives a bijection between $\AV_n(213,231)$ and the set of binary word of size $n-1$. The bijected word $w$ of $\pi$, is the word where each letter at position $i$ represents if $\pi[i]$ is an ascent or descent element (or is a LRMax or a LRMin). We call this bijection $\bijection$.

A $(213,231)$-avoiding permutation has a particular form.
If we take only the descent elements, the points draw a north-east to south-west line
and if we take only the ascent elements, the points draw a south-east to north-west line.
This shape the permutation as a $>$. For convenience when drawing a random $(213,231)$-avoiding permutation
we will sometime represent a sequence of ascent/descent element by lines (see figure \ref{fig:shape of the permutation} and \ref{fig:shape of the permutation plus factor}).

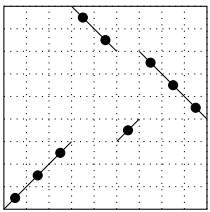
\begin{figure}[t] 
	\centering
	\begin{tikzpicture}[scale=.3]
	\draw (0,0) -- (9,0) -- (9,9) -- (0,9) -- cycle;
	\foreach \i in {1, 2, 3, 4, 5, 6, 7, 8, 9}{
		\draw[dotted] (0,\i) -- (9, \i);
		\draw[dotted] (\i,0) -- (\i, 9);
	}
	\draw[fill=black] (0.5,0.5) circle (2mm);
	\draw[fill=black] (1.5,1.5) circle (2mm);
	\draw[fill=black] (2.5,2.5) circle (2mm);
	\draw[fill=black] (3.5,8.5) circle (2mm);
	\draw[fill=black] (4.5,7.5) circle (2mm);
	\draw[fill=black] (5.5,3.5) circle (2mm);
	\draw[fill=black] (6.5,6.5) circle (2mm);
	\draw[fill=black] (7.5,5.5) circle (2mm);
	\draw[fill=black] (8.5,4.5) circle (2mm);
	
	\draw (0,0) -- (3,3);
	\draw (3,9) -- (5,7);
	\draw (5,3) -- (6,4);
	\draw (6,7) -- (9,4);

	\end{tikzpicture}

	\caption{The $(213,231)$-avoiding permutation $123984765$, every point which is on a north-east to south-west line represents a descent element and every point which is on a south-east to north-west line represents an ascent element.}
	\label{fig:shape of the permutation}
\end{figure}	

The following lemma is central to our algorithm.

\begin{lemma}
\label{lemma:MatchStripeToPermutation}
Let $\pi$ and $\sigma$ be two $(213,231)$-avoiding permutations,
Then, $\pi$ matches $\sigma$ if and only if 
there exists a subsequence $t$ of $\pi$ such $\bijection(t)=\bijection(\sigma)$.
\end{lemma}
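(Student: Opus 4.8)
The plan is to reduce the whole statement to the fact, announced immediately before it, that $\bijection$ is a bijection between $\AV_n(213,231)$ and the binary words of length $n-1$; the genuine content is the \emph{injectivity} of $\bijection$, together with the elementary observation that the ascent/descent word recorded by $\bijection$ depends only on the relative order of the entries, so it is unchanged when we pass from a subsequence of $\pi$ to the permutation (in $\AV(213,231)$) to which it is order-isomorphic. Throughout, whenever I write $\bijection(t)$ for a subsequence $t$ of $\pi$, I mean $\bijection$ applied to the permutation isomorphic to $t$, and I would first record that this is well defined because $\pi$ avoids $213$ and $231$ and these classes are hereditary, so every subsequence of $\pi$ is again $(213,231)$-avoiding.

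For the forward implication I would argue that it is essentially a restatement of the definition of matching. Assume $\sigma \preceq \pi$. Then there is a subsequence $t$ of $\pi$ order-isomorphic to $\sigma$, so the permutation associated to $t$ is exactly $\sigma$, and hence $\bijection(t)=\bijection(\sigma)$, producing the required subsequence at once. The backward implication is where the bijection does the work. Suppose $t$ is a subsequence of $\pi$ with $\bijection(t)=\bijection(\sigma)$. Equality of these two binary words first forces $|t|-1=|\sigma|-1$, so $t$ and $\sigma$ have a common length $k$. By heredity the permutation associated to $t$ lies in $\AV_k(213,231)$, and $\sigma\in\AV_k(213,231)$ as well; since they share the same image under $\bijection$, injectivity yields that $t$ is order-isomorphic to $\sigma$. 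Thus $t$ is a matching of $\sigma$ in $\pi$, i.e. $\sigma\preceq\pi$.

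The step I expect to be the crux is pinning down the injectivity of $\bijection$ (insofar as it is not simply quoted from the preceding paragraph). Here I would give a deterministic reconstruction: by Corollaries~\ref{corollary:minmaxelement} and~\ref{corollary:max is ascent}, each of the first $k-1$ entries of a permutation in $\AV_k(213,231)$ is forced to be a LRMax exactly at the descent positions and a LRMin exactly at the ascent positions. Reading a word $w$ of length $k-1$ from left to right, a descent letter therefore forces the current entry to be the largest value not yet used (a LRMax is the maximum of the suffix) and an ascent letter forces it to be the smallest value not yet used, with the final entry fixed by elimination; this determines the permutation uniquely from $w$, which is precisely injectivity. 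The complementary half, namely that every such reconstruction indeed avoids $213$ and $231$ (surjectivity), is already contained in the bijection claim. The main obstacle is thus not any estimate but making this reconstruction precise and verifying carefully that "same relative order" is transported faithfully between $t$ and the permutation it induces, so that $\bijection(t)$ and the deterministic recovery step really match up.
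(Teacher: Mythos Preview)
Your proof is correct and cleanly organized, but it is packaged differently from the paper. The paper does not invoke the bijection $\bijection$ as a black box; instead it proves the backward direction by a direct induction on $|\sigma|$, peeling off the leftmost element: from $\bijection(t)[2{:}]=\bijection(\sigma)[2{:}]$ the inductive hypothesis gives that $t[2{:}]$ matches $\sigma[2{:}]$, and since $\bijection(t)[1]=\bijection(\sigma)[1]$, Lemma~\ref{lemma:first element is 1 or n} forces $t[1]$ and $\sigma[1]$ to be simultaneously the minimum or simultaneously the maximum of their respective sequences, so the full match extends. Your route instead reduces everything to the injectivity of $\bijection$ on $\AV_k(213,231)$ and then supplies the deterministic left-to-right reconstruction (largest unused value at a descent letter, smallest unused at an ascent letter) to justify that injectivity. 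The two arguments are really the same idea viewed from different ends: the paper's induction is precisely your reconstruction carried out one step at a time inside the matching argument, whereas you factor the injectivity out as a standalone fact and then apply it in one stroke. Your version has the advantage of making explicit why the preceding ``bijection'' sentence already contains all the content of the lemma; the paper's version is slightly more self-contained, since it never needs to name or separately prove injectivity.
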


\begin{proof}[of Lemma~\ref{lemma:MatchStripeToPermutation}]
The forward direction is obvious.
We prove the backward direction by induction on the size of the pattern
$\sigma$.
The base case is a pattern of size $2$.
Suppose that $\sigma = 12$ and thus $\bijection(\sigma) = ascent$.
Let $t = \pi_{i_1}\pi_{i_2}$, $i_1 < i_2$, be a subsequence of $\pi$
such that $\bijection(t) = ascent$, this reduces to saying that
$\pi_{i_1} < \pi_{i_2}$, 
and hence that $t$ is a matching of $\sigma = 12$ in $\pi$.
A similar argument shows that the lemma holds true for $\sigma = 21$.
Now, assume that the lemma is true for all patterns up to size $k \geq 2$.
Let $\sigma \in \Av{k+1}{231,213}$ and
let $t$,
be a subsequence of $\pi$ of length $k+1$ such that
$\bijection(t) = \bijection(\sigma)$.
As $\bijection(t)[2:] = \bijection(\sigma)[2:]$
by the inductive hypothesis, it follows that
$t[2:]$ is a matching of $\sigma[2:]$.
Moreover $\bijection(t)[1] = \bijection(\sigma)[1]$ 
thus $t[1]$ and $\sigma[1]$ are both either the minimal or the maximal
element of their respective subsequences.
Therefore, $t$ is a matching of $\sigma$ in $\pi$.
\qed
\end{proof}

We are now ready to solve the pattern matching problem in case
both $\pi$ and $\sigma$ are $(213, 231)$-avoiding.

\begin{proposition}
	\label{Proposition:both permutations are avoiding}
	Let $\pi$ and $\sigma$ be two $(213,231)$-avoiding permutations.
	One can decide whether $\pi$ matches $\sigma$ in linear time.
\end{proposition}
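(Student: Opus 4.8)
The plan is to turn the characterization of Lemma~\ref{lemma:MatchStripeToPermutation} into a subsequence test on binary words. By that lemma, deciding $\sigma \preceq \pi$ is equivalent to deciding whether $\pi$ has a subsequence $t$ with $\bijection(t) = \bijection(\sigma)$. First I would precompute the two binary words $v = \bijection(\pi)$, of length $n-1$, and $w = \bijection(\sigma)$, of length $k-1$; each is obtained in linear time by a single scan that compares consecutive entries (equivalently, by Corollary~\ref{corollary:max is ascent}, by marking each position as a LRMin or a LRMax).

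The heart of the argument is the observation that the ascent/descent label of an entry of $\pi$ is inherited by every subsequence that contains it, so that $\bijection(t)$ is a genuine subword of $v$ rather than something depending on the whole of $t$. Fix a subsequence $t = \pi_{i_1}\pi_{i_2}\cdots\pi_{i_k}$ of $\pi$ and an index $j \le k-1$. By Corollary~\ref{corollary:minmaxelement}, $\pi_{i_j}$ is either a LRMin or a LRMax of $\pi$. If it is a LRMin then it is smaller than every entry of $\pi$ to its right, in particular smaller than $\pi_{i_{j+1}}$, so $\pi_{i_j}$ is an ascent of $t$; symmetrically, if it is a LRMax then it is a descent of $t$. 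Hence $\bijection(t)[j] = v[i_j]$ for every $j \le k-1$, i.e. $\bijection(t)$ is exactly the subword $v[i_1]\,v[i_2]\cdots v[i_{k-1}]$ of $v$ read at the first $k-1$ chosen positions. Each $v[i_j]$ is well defined because $i_j < i_{j+1} \le n$ forces $i_j \le n-1$.

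Conversely, any occurrence of $w$ as a subsequence of $v$, say at positions $i_1 < \cdots < i_{k-1} \le n-1$, can be completed to a length-$k$ subsequence of $\pi$ by appending any index $i_k$ with $i_{k-1} < i_k \le n$ (one exists since $i_{k-1} \le n-1$), and the resulting $t$ satisfies $\bijection(t) = w = \bijection(\sigma)$. Combining the two directions with Lemma~\ref{lemma:MatchStripeToPermutation}, $\pi$ matches $\sigma$ if and only if the binary word $\bijection(\sigma)$ occurs as a subsequence of the binary word $\bijection(\pi)$.

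It then remains to test whether $w$ is a subsequence of $v$, which is the classical string-subsequence problem solved by the obvious greedy left-to-right scan: walk once through $v$, advancing a pointer into $w$ whenever the current letters agree, and report success iff the pointer reaches the end of $w$. This runs in $O(n)$ time and linear space, so together with the linear-time computation of $v$ and $w$ the whole decision procedure is linear. I expect the only delicate point to be the inheritance claim of the second paragraph, which is what collapses the apparently subsequence-dependent word $\bijection(t)$ into a fixed subword of $\bijection(\pi)$; once that is established, the reduction to string subsequence testing and the linear-time bound are routine.
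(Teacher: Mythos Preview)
Your proposal is correct and follows exactly the paper's approach: reduce via Lemma~\ref{lemma:MatchStripeToPermutation} to testing whether $\bijection(\sigma)$ is a subsequence of $\bijection(\pi)$, then solve that by a greedy scan. In fact you are more thorough than the paper, which simply asserts the reduction; your ``inheritance'' argument (that the ascent/descent label of $\pi_{i_j}$ in $\pi$ coincides with its label inside any subsequence $t$, because LRMin/LRMax status is preserved when passing to subsequences) is precisely the point the paper leaves implicit.
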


\begin{proof}
According to Lemma~\ref{Proposition:both permutations are avoiding} the problem reduces
to deciding whether $s_\sigma$ occurs as a subsequence in $s_\pi$.
A straightforward greedy approach solves this issue in linear-time.
\qed
\end{proof}

Thank to Corollary~\ref{corollary:max is ascent},
we can compute the bijected words in the same time that we running the greedy algorithm, this gives us a on-line algorithm.

%%%%%%%%%%%%%%%%%%%%%%%%%%%%%%%%%%%%%%%%%%%%%%%%%%%%%%%%%%%%%%%%%%%%%%%%%%%%%%%

\section{$\sigma$ only is $(213,231)$-avoiding}
\label{section:sigma only avoid 231 and 213}

This section focuses on the pattern matching problem
in case only the pattern $\sigma$ avoids $231$ and $213$.
We need to consider a specific decomposition of $\sigma$ into \textit{factor} :
we split the permutation into largest sequences of ascent element and descent element, respectively called an ascent factor and a descent factor.
This correspond to split the permutation between every pair of ascent-descent element and descent-ascent element (see figure \ref{fig:shape of the permutation plus factor}).
For the special case of $(213,231)$-avoiding, this correspond to split the permutation into largest sequence of consecutive element.
We will label the factors from right to left.

We introduce the notation $\LMEi(s)$ : Suppose that $s$ is a subsequence of $S$, $\LMEi(s)$ is the index of the left most element of $s$ in $S$. Thus
for every $\factor$, $\LMEi(\factor(j))$ stand for the index in $\sigma$
of the leftmost element of $\factor(j)$.
For example,
$\sigma = 123984765$ is split as
 $123-98-4-765$. 
Hence
$\sigma =$ $\factor(4)$ $\factor(3)$ $\factor(2)$ $\factor(1)$ with
$\factor(4) =123$, $\factor(3) = 98$, $\factor(2) = 4$ and $\factor(1) = 765$.
Furthermore,
$\LMEi(\factor(4)) = 1$, $\LMEi(\factor(3)) = 4$, $\LMEi(\factor(2)) = 6$ and $\LMEi(\factor(1)) = 7$.
We represent elements matching an ascent (resp. descent) factor 
by a rectangle which has the left most matched point of the factor as the left bottom (resp. top) corner and the right most matched point of the factor as the right top (resp. bottom) corner.
We can remove the two right most rectangles and replace it by the smallest rectangle that contains both of them and repeat this operation to represent part of a matching (see figure \ref{fig:shape of the permutation plus factor plus factor}).

\begin{remark}
A factor is either an increasing or a decreasing sequence of element. Thus while matching a factor, it is enough to find an increasing or a decreasing subsequence of same size or bigger than the factor.
\end{remark}

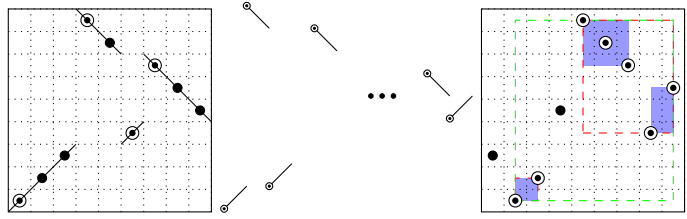
\begin{figure}[t] 
	\centering
	\begin{tikzpicture}[scale=.3]
	\draw (0,0) -- (9,0) -- (9,9) -- (0,9) -- cycle;
	\foreach \i in {1, 2, 3, 4, 5, 6, 7, 8, 9}{
		\draw[dotted] (0,\i) -- (9, \i);
		\draw[dotted] (\i,0) -- (\i, 9);
	}
	\draw[fill=black,double, double distance=1pt] (0.5,0.5) circle (2mm);
	\draw[fill=black] (1.5,1.5) circle (2mm);
	\draw[fill=black] (2.5,2.5) circle (2mm);
	\draw[fill=black,double, double distance=1pt] (3.5,8.5) circle (2mm);
	\draw[fill=black] (4.5,7.5) circle (2mm);
	\draw[fill=black,double, double distance=1pt] (5.5,3.5) circle (2mm);
	\draw[fill=black,double, double distance=1pt] (6.5,6.5) circle (2mm);
	\draw[fill=black] (7.5,5.5) circle (2mm);
	\draw[fill=black] (8.5,4.5) circle (2mm);
	
	\draw (0,0) -- (3,3);
	\draw (3,9) -- (5,7);
	\draw (5,3) -- (6,4);
	\draw (6,7) -- (9,4);
	
	\end{tikzpicture}	
	\begin{tikzpicture}[scale=.3]

	\draw (0,0) -- (1,1);
	\draw (1,9) -- (2,8);
	\draw (2,1) -- (3,2);
	\draw (4,8) -- (5,7);
	
	\draw[fill=black] (6.5,5) circle (1mm);
	\draw[fill=black] (7,5) circle (1mm);
	\draw[fill=black] (7.5,5) circle (1mm);
	
	\draw (9,6) -- (10,5);
	\draw (10,4) -- (11,5);
	
	\draw[fill=black,double, double distance=0.5pt] (0,0) circle (1mm);
	\draw[fill=black,double, double distance=0.5pt] (1,9) circle (1mm);
	\draw[fill=black,double, double distance=0.5pt] (2,1) circle (1mm);
	\draw[fill=black,double, double distance=0.5pt] (4,8) circle (1mm);
	\draw[fill=black,double, double distance=0.5pt] (9,6) circle (1mm);
	\draw[fill=black,double, double distance=0.5pt] (10,4) circle (1mm);
	
	\end{tikzpicture}	
	\begin{tikzpicture}[scale=.3]
	
	\fill[blue!40!white] (1.5,1.5) rectangle (2.5,0.5);
	\fill[blue!40!white] (4.5,8.5) rectangle (6.5,6.5);
	\fill[blue!40!white] (7.5,3.5) rectangle (8.5,5.5);
	
	\draw[red,dashed] (1.5,1.5) rectangle (2.5,0.5);
	\draw[red,dashed] (4.5,8.5) rectangle (8.5,3.5);
	
	\draw[green,dashed] (1.5,0.5) rectangle (8.5,8.5);

	\draw (0,0) -- (9,0) -- (9,9) -- (0,9) -- cycle;
	\foreach \i in {1, 2, 3, 4, 5, 6, 7, 8, 9}{
		\draw[dotted] (0,\i) -- (9, \i);
		\draw[dotted] (\i,0) -- (\i, 9);
	}
	\draw[fill=black] (0.5,2.5) circle (2mm);
	\draw[fill=black,double, double distance=1pt] (1.5,0.5) circle (2mm);
	\draw[fill=black,double, double distance=1pt] (2.5,1.5) circle (2mm);
	\draw[fill=black] (3.5,4.5) circle (2mm);
	\draw[fill=black,double, double distance=1pt] (4.5,8.5) circle (2mm);
	\draw[fill=black,double, double distance=1pt] (5.5,7.5) circle (2mm);
	\draw[fill=black,double, double distance=1pt] (6.5,6.5) circle (2mm);
	\draw[fill=black, double distance=1pt] (7.5,3.5) circle (2mm);
	\draw[fill=black, double distance=1pt] (8.5,5.5) circle (2mm);

	\end{tikzpicture}				
	
	\caption{The left figure is the  $(213,231)$-avoiding permutation $123984765$. Every line represents a factor, every circled point represents the left most element of each factor. 
	The central figure is a generalisation of $(213,231)$-avoiding permutation.	
	The right figure represents a matching of $1276534$ in $312598746$, the blue rectangles represent the matching, the red dotted rectangles represent the matching after the first replacement and the green rectangle represents the matching after the second replacement.}
	\label{fig:shape of the permutation plus factor}
\end{figure}

\begin{corollary}
\label{corollary:whereIsMax}
Given a permutation $\sigma \in \AV(213,231)$ and a suffix of its 
decomposition $\factor(i)$ $\factor(i-1)$ $\ldots$ $\factor(1)$, 
if $\factor(i)$ is an ascent (respectively descent) factor 
then the maximal (resp. minimal) element of $\factor(i)$ $\factor(i-1)$ $\ldots$ $\factor(1)$
is the left most element of $\factor(i-1)$
\end{corollary}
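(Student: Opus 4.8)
The plan is to localise the entire statement at the single boundary between $\factor(i)$ and $\factor(i-1)$, and to read off the extremal element there directly from Corollary~\ref{corollary:max is ascent}, which identifies ascent elements with LRMin and descent elements with LRMax. First I would record two structural observations. Since each factor is by construction a maximal run of positions of one type (all ascents or all descents), two consecutive factors must have opposite types; in particular, if $\factor(i)$ is an ascent factor then $\factor(i-1)$ is a descent factor, and symmetrically. Second, the leftmost position of $\factor(i-1)$ is always strictly less than $n$, so it is a genuine ascent or descent element to which Corollary~\ref{corollary:max is ascent} applies: this is clear when $\factor(i-1)$ is not the rightmost factor, and when $i-1=1$ it holds because the untyped final position $n$ creates no split and is therefore appended to the run ending at $n-1$, so $\factor(1)$ contains both positions $n-1$ and $n$ and its leftmost position is at most $n-1$.

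With these in hand I would treat the ascent case and deduce the descent case by symmetry. Write $a=\LMEi(\factor(i))$ and $b=\LMEi(\factor(i-1))$, so that the positions of $\factor(i)$ are exactly $a,a+1,\ldots,b-1$ and the element claimed to be maximal is $\sigma[b]$. Because $\factor(i)$ is an ascent factor, all of the positions $a,\ldots,b-1$ are ascents, which yields the increasing chain $\sigma[a]<\sigma[a+1]<\cdots<\sigma[b-1]<\sigma[b]$, the final inequality holding precisely because $b-1$ is itself an ascent. Hence $\sigma[b]$ exceeds every element of $\factor(i)$.

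It then remains to compare $\sigma[b]$ with the rest of the suffix, namely $\sigma[b:]=\factor(i-1)\factor(i-2)\cdots\factor(1)$. By the first observation $\factor(i-1)$ is a descent factor, so $\sigma[b]$ is a descent element; by the second observation Corollary~\ref{corollary:max is ascent} applies and tells me that $\sigma[b]$ is a LRMax, i.e.\ the maximum of $\sigma[b:]$. Putting the two comparisons together, $\sigma[b]$ is the maximum of the entire suffix $\sigma[a:]=\factor(i)\cdots\factor(1)$, and it is by definition the leftmost element of $\factor(i-1)$, which is the claim. The descent case then follows by applying this argument to the complement permutation $(n+1-\sigma[1])\cdots(n+1-\sigma[n])$, which is again $(213,231)$-avoiding, preserves all factor boundaries, and interchanges ascents with descents and maxima with minima.

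I expect the only delicate step to be the second observation: one must be careful that the leftmost element of $\factor(i-1)$ is a bona fide ascent or descent element, that is, never the untyped last entry of $\sigma$, since otherwise Corollary~\ref{corollary:max is ascent} could not be invoked. Once the boundary position $b$ is known to be typed, the remainder is a direct unwinding of the definition of a factor together with the LRMin/LRMax characterisation.
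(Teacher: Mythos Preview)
Your argument is correct and follows essentially the same route as the paper, which simply notes that the statement is a corollary of Lemma~\ref{lemma:first element is 1 or n} (equivalently, of Corollary~\ref{corollary:max is ascent}): the first descent element after a run of ascents is a LRMax and hence dominates everything to its right, while the ascending run shows it dominates everything to its left in the suffix. Your write-up is more careful than the paper's one-line justification---in particular your verification that $\LMEi(\factor(i-1))<n$ so that Corollary~\ref{corollary:max is ascent} genuinely applies---but the underlying idea is identical.
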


This is a corollary of lemma \ref{lemma:first element is 1 or n}.
This states that given a permutation in $\AV(213,231)$ 
if the permutation starts with ascent (respectively descent) elements then the maximal (resp. minimal) element of this permutation is the first descent (resp. ascent) element (see figure \ref{fig:where to find the max}).

\begin{figure}[t] 
	\centering	
	\begin{tikzpicture}[scale=.3]

	\draw (0,0) -- (1,1);
	\draw (2,10) -- (3,9);
	
	\draw[fill=black] (3.5,5) circle (1mm);
	\draw[fill=black] (4,5) circle (1mm);
	\draw[fill=black] (4.5,5) circle (1mm);
	
	\draw [blue] (6,2) -- (7,3);
	\draw (7,7) -- (8,6);
	\draw (8,3) -- (9,4);
	\draw (9,6) -- (10,5);
	\draw (10,4) -- (11,5);

	\draw[fill=black] (7,7) circle (2mm);
	\draw[fill=blue] (6,2) circle (2mm);
	
	\draw[dotted,-] (0,7) -- (13,7) node[right] {Maximal y-coordinate};
	\draw[dotted,-] (0,2) -- (13,2) node[right] {Minimal y-coordinate};
	
	\draw[|->] (6,0) -- (13,0) node[below] {Suffix starting at $\factor(i)$};

%	\draw (6.2,2.2) -- (6.5,10);
%	\path (6.2,10) node[above] {$x$};
	
	\end{tikzpicture}

	\caption{
		The maximal element of the suffix starting at  $\factor(i)$ (represented by the blue line)
		is the left most element of $\factor(i-1)$ (represented by the the black dot).
		$\LM{\ppattern}{\ptext}{i}{j}$ is the minimal value of the matching of the maximal element (the black dot) in all the matchings of the suffix starting at $\LMEi(\factor(i))$ (the blue dot) in $\pi[j:]$.
	}	
	\label{fig:where to find the max}
\end{figure}
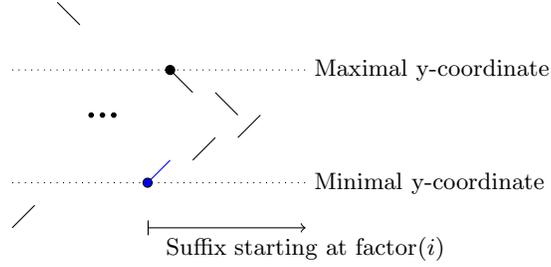

We now define the set
$\SET{\ppattern}{\ptext}{i}{j}$ 
as the set of every subsequence $s$ of $\ptext[j:]$ that starts at $\ptext[j]$ and
that is a matching of $\factor(i)$ $\factor(i-1)$ $\ldots$ $\factor(1)$ and .

\begin{lemma}
\label{lemma:ts}
Let $\ppattern$ be a permutation, 
$\factor(i)$ be an ascent (respectively descent) factor,
$s$ be a subsequence of $\pi$ such that $s \in \SET{\ppattern}{\ptext}{i}{j}$ and
that minimizes (resp. maximizes) the match of the left most element of $\factor(i-1)$.
For all subsequences 
$s' \in \SET{\ppattern}{\ptext}{i}{j}$ and for all subsequences $t$ of $\ptext$, such that $t=t's'$, 
if $t$ is a matching of $\ppattern[\LMEi(\factor(i+1)):]$ such that the left most element of $\factor(i)$ is matched to $\ptext[j]$ then the subsequence $t's$ is a matching of $\ppattern[\LMEi(\factor(i+1)):]$ such that the left most element of $\factor(i)$ is matched to $\ptext[j]$.
\end{lemma}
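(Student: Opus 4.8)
The plan is to prove \textbf{Lemma~\ref{lemma:ts}} by an \emph{exchange argument}: I will show that overwriting the suffix matching $s'$ by the extremal matching $s$ cannot break any order constraint between the suffix and the block $t'$ that matches $\factor(i+1)$. Throughout I treat the case where $\factor(i)$ is an ascent factor; the descent case is dual and follows by the order-reversing (complement) symmetry, which preserves $(213,231)$-avoidance and interchanges ascents with descents and maxima with minima.

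First I would collect the structural facts that make the interaction between $\factor(i+1)$ and the suffix $\factor(i)\,\factor(i-1)\cdots\factor(1)$ essentially one-dimensional. Since the factors are maximal runs, $\factor(i)$ being an ascent factor forces $\factor(i+1)$ to be a descent factor. Let $m := \ppattern[\LMEi(\factor(i-1))]$ be the leftmost element of $\factor(i-1)$; by Corollary~\ref{corollary:whereIsMax}, $m$ is the maximum of the suffix, so in \emph{any} matching of the suffix the value assigned to $m$ is the largest value that matching uses. On the other side, every element of the descent factor $\factor(i+1)$ is a descent element, hence a LRMax by Corollary~\ref{corollary:max is ascent}; in particular its rightmost, and smallest, element exceeds everything lying to its right, i.e.\ every element of the suffix. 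Consequently every element of $\factor(i+1)$ sits strictly above every element of $\factor(i)\cdots\factor(1)$ in value.

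Next I would isolate the single binding constraint. Because $t=t's'$ matches $\ppattern[\LMEi(\factor(i+1)):] = \factor(i+1)\,\factor(i)\cdots\factor(1)$ with the leftmost element of $\factor(i)$ matched to $\ptext[j]$, the block $t'$ occupies positions $<j$ and matches $\factor(i+1)$, while $s'$ occupies positions $\ge j$ and matches the suffix starting at $\ptext[j]$. For $t's$ the intra-block conditions come for free: $t'$ matches $\factor(i+1)$ by hypothesis, $s$ matches the suffix because $s\in\SET{\ppattern}{\ptext}{i}{j}$, and the positions of $t'$ precede those of $s$ exactly as they precede those of $s'$, since both $s$ and $s'$ start at $\ptext[j]$. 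By the previous paragraph the \emph{only} remaining requirement is the value condition that every element of $\factor(i+1)$ stay above every element of the suffix, which, given the two extremal facts above, reduces to the single inequality ``the smallest value used by $t'$ exceeds the value $s$ assigns to $m$''.

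Finally comes the exchange. Validity of $t's'$ gives that the smallest value of $t'$ exceeds the value $s'$ assigns to $m$; and since $s$ was chosen to \emph{minimize} the value matched to $m$, the value $s$ assigns to $m$ is at most the value $s'$ assigns to $m$. Chaining these inequalities shows the smallest value of $t'$ also exceeds the value $s$ assigns to $m$, so $t's$ meets the binding constraint and is a matching of $\ppattern[\LMEi(\factor(i+1)):]$ with the leftmost element of $\factor(i)$ still matched to $\ptext[j]$. I expect the delicate point to be the claim in the third paragraph that the value at $m$ is the \emph{sole} governing quantity: one must be certain that no other pair of elements, one in $\factor(i+1)$ and one in the suffix, imposes a tighter requirement. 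This is precisely where $(213,231)$-avoidance enters, through Corollary~\ref{corollary:max is ascent}, forcing $\factor(i+1)$ entirely above the suffix so that the whole cross-factor interaction collapses to one inequality.
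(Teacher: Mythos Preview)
Your proof is correct and follows essentially the same exchange argument as the paper: reduce the cross-block compatibility to the single inequality $\min(t') > \max(s)$, then chain $\max(s)\le\max(s')<\min(t')$ using the minimality of the value $s$ assigns to the leftmost element of $\factor(i-1)$. Your version is more explicit than the paper's---you spell out why $\factor(i+1)$ is a descent factor, invoke Corollaries~\ref{corollary:max is ascent} and~\ref{corollary:whereIsMax} to justify that the value at $m$ governs everything, and separately verify the position constraints---whereas the paper compresses all of this into three lines, but the underlying idea is identical.
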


This lemma states that given any matching of $\factor(i+1)$ $\factor(i)$ $\ldots$ $\factor(1)$,
where $\factor(i)$ is an ascent (resp. descent) factor,
we can replace the part of the match that match $\factor(i)$ $\ldots$ $\factor(1)$ by any match
that minimise (resp. maximise) the left most element of $\factor(i-1)$. Indeed the left most element of $\factor(i-1)$ is the maximal (resp. minimal) element of  $\factor(i)$ $\ldots$ $\factor(1)$ (see figure \ref{fig:minimise the max}). 

\begin{proof}[of Lemma~\ref{lemma:ts}]
	By definition $s$ is a matching of $\ppattern[\LMEi(\factor(i)):]$. To prove that $ts$ is an matching of $\ppattern[\LMEi(\factor(i+1)):]$ we need to prove that every element of t is larger than every element of $s$. If $ts'$ is a matching of $\ppattern[\LMEi(\factor(i+1)):]$ then every element of $t$ is larger than every element of $s'$. Moreover the maximal element of $s$ is smaller than  the maximal element of $s'$ so every element of $s$ is smaller than every element of $s'$ thus every element of $s$ is smaller than every element of $t$. We use a similar argument if $\factor(i)$ is a descent factor.
	\qed
\end{proof}

\begin{corollary}
\label{corollary:we can chose a matching}	
Let $\ppattern$ be a permutation, 
$\factor(i)$ be an ascent (respectively descent) factor
and 
$s$ be a subsequence of $\pi$ such that $s \in \SET{\ppattern}{\ptext}{i}{j}$ and
that minimizes (resp. maximizes) the match of the left most element of $\factor(i-1)$.
These following statements are equivalent :
\begin{itemize}
	\item There exists a 
	matching of $\ppattern$ in $\ptext$ with the left most element of $\factor(i)$ is matched to $\ptext[j]$.
	\item There exists a matching $t$ of $\sigma[:\LMEi(\factor(i))-1]$ in $\ptext[:j-1]$  such that $ts$ is a matching of $\ppattern$ in $\ptext$ with the left most element of $\factor(i)$ matched to $\ptext[j]$.
\end{itemize}
\end{corollary}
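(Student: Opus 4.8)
The plan is to prove the two implications separately, with essentially all the work in the forward (top-to-bottom) direction. The backward implication is immediate: a matching $t$ of $\ppattern[:\LMEi(\factor(i))-1]$ in $\ptext[:j-1]$ for which $ts$ is a matching of $\ppattern$ already exhibits, via $ts$ itself, a matching of $\ppattern$ in which the leftmost element of $\factor(i)$ is sent to $\ptext[j]$. So the real content is the forward direction, for which the strategy is a swapping argument in the spirit of Lemma~\ref{lemma:ts}, but lifted from the single neighbouring factor $\factor(i+1)$ to the whole prefix.

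For the forward direction I would start from an arbitrary matching $M$ of $\ppattern$ in $\ptext$ that sends the leftmost element of $\factor(i)$ to $\ptext[j]$, and split it as $M = t\,s'$, where $s'$ is the part matching the suffix $\ppattern[\LMEi(\factor(i)):] = \factor(i)\factor(i-1)\cdots\factor(1)$ and $t$ is the part matching the prefix $\ppattern[:\LMEi(\factor(i))-1]$. By construction $s'$ starts at $\ptext[j]$ and matches the suffix, so $s' \in \SET{\ppattern}{\ptext}{i}{j}$, while $t$ is a subsequence of $\ptext[:j-1]$ matching $\ppattern[:\LMEi(\factor(i))-1]$. The goal is then to show that replacing $s'$ by the extremal witness $s$ keeps $t\,s$ a matching of $\ppattern$; this very $t$ is the witness demanded by the second statement.

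To justify the swap I would first pin down the two extreme values of the suffix match, treating the ascent case (the descent case being symmetric). Lemma~\ref{lemma:first element is 1 or n}, applied to the $(213,231)$-avoiding suffix, shows that the leftmost element of $\factor(i)$ is the minimum of $\factor(i)\cdots\factor(1)$, so both $s$ and $s'$ realise this minimum at $\ptext[j]$, giving $\min(s)=\min(s')=\ptext[j]$; and Corollary~\ref{corollary:whereIsMax} identifies the maximum of the suffix with the leftmost element of $\factor(i-1)$, whose image $s$ is chosen to minimise, so $\max(s)\le\max(s')$. Next I would invoke the structural fact, obtained by iterating Lemma~\ref{lemma:first element is 1 or n}, that the suffix occupies a contiguous band of values in $\ppattern$, whence every element of $\ppattern[:\LMEi(\factor(i))-1]$ lies in value either strictly above or strictly below that band. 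Since $M$ is order-isomorphic to $\ppattern$, the elements of $t$ matching above-band positions are mapped above $\max(s')\ge\max(s)$, and those matching below-band positions are mapped below $\ptext[j]=\min(s)$; replacing $s'$ by $s$ therefore disturbs none of the cross-boundary comparisons, while the internal order of $s$ reproduces the suffix by definition. Hence $t\,s$ is again a matching of $\ppattern$ sending the leftmost element of $\factor(i)$ to $\ptext[j]$, as required.

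The main obstacle is exactly this last verification: showing that the swap is safe against every element of the prefix at once, not merely against the adjacent factor $\factor(i+1)$ handled by Lemma~\ref{lemma:ts}. The crux is to combine the extremal choice of $s$, which controls precisely the one value (the image of the leftmost element of $\factor(i-1)$) through which the suffix interfaces with everything lying to its left, with the contiguity of the suffix's value band, so that all the more distant factors are automatically on the correct side of $s$.
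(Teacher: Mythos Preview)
Your argument is correct. The backward implication is indeed trivial, and your swapping argument for the forward direction is sound: the two controlling facts are that $\min(s)=\min(s')=\ptext[j]$ (because the leftmost element of an ascent factor is the minimum of the whole suffix, by Lemma~\ref{lemma:first element is 1 or n}) and $\max(s)\le\max(s')$ (because the maximum of the suffix is the leftmost element of $\factor(i-1)$, by Corollary~\ref{corollary:whereIsMax}, and $s$ is chosen to minimise precisely that image). Together with the observation that every suffix of a $(213,231)$-avoiding permutation occupies a contiguous interval of values, these guarantee that every element of $t$ stays on the correct side of every element of $s$, so $ts$ is a matching.

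As for the comparison: the paper does not actually prove this corollary. It states it immediately after Lemma~\ref{lemma:ts} with only an informal sentence of explanation, implicitly treating it as a direct consequence. But Lemma~\ref{lemma:ts} as written only swaps $s'$ for $s$ against the single adjacent factor $\factor(i+1)$, whose elements lie entirely on one side of the suffix; your proof supplies exactly the missing step needed to go from one neighbouring factor to the full prefix, namely that the prefix may contain elements both above and below the value band of the suffix, and that the contiguity of this band together with the extremal choice of $s$ handles both sides simultaneously. So your route is not so much different from the paper's as it is a completion of it: you are spelling out the argument the paper gestures at but does not write down.
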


This corollary takes a step further from the previous one, it states that if there is no matching using any match that maximise (resp. minimise) the left most element of $\factor(i-1)$ then there does not exist any matching at all. This is central to the algorithm because it allows to test only the matching that  maximise (resp. minimise) the left most element of $\factor(i-1)$ (see figure \ref{fig:minimise the max}).

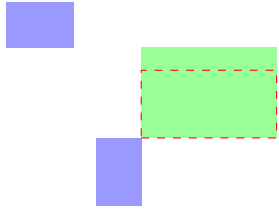
\begin{figure}[t] 
	\centering	
	\begin{tikzpicture}[scale=.3]
	
	\fill[blue!40!white] (0,9) rectangle (3,7);
	\fill[blue!40!white] (4,0) rectangle (6,3);
	\fill[green!40!white] (6,3) rectangle (12,7);

	%		\draw (0,0) -- (9,0) -- (9,9) -- (0,9) -- cycle;
	%		\foreach \i in {1, 2, 3, 4, 5, 6, 7, 8, 9}{
	%			\draw[dotted] (0,\i) -- (9, \i);
	%			\draw[dotted] (\i,0) -- (\i, 9);
	%		}
	\draw[red,dashed] (6,3) rectangle (12,6);
	
	\end{tikzpicture}
	
	\caption{In a matching, we can replace the green area by the red dashed area.}
	\label{fig:minimise the max}
\end{figure}

% This corollary allows us to way to choose a match. Indeed if one does not find a match such that a well chosen subsequence is a suffix of the pattern then there is no match at all.

\begin{proposition}
	\label{Proposition:sigma avoids 213 and 231}
	Let $\sigma \in \AV_k(213,231)$ and $\pi \in S_n$.
	\\ One can decide in $O(max(kn^2,n^2\log(\log(n)))$ time
	and $O(kn^2)$ space if $\pi$ matches $\sigma$.
\end{proposition}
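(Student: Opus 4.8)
The plan is to turn the structural results of this section into a dynamic program that scans the factor decomposition $\factor(m)\factor(m-1)\cdots\factor(1)$ of $\sigma$ from right to left. First I would compute this decomposition together with the indices $\LMEi(\factor(i))$ in $O(k)$ time; by the remark above, each factor is a monotone run, so matching it amounts to locating a monotone subsequence of $\ptext$ of the right length and orientation. The core object of the program is the value $\LM{\ppattern}{\ptext}{i}{j}$: the extremal (minimal if $\factor(i)$ is an ascent factor, maximal if it is a descent factor) value to which the leftmost element of $\factor(i-1)$ --- which by Corollary~\ref{corollary:whereIsMax} is exactly the extreme element of the suffix $\factor(i)\cdots\factor(1)$ --- can be mapped, over all matchings of that suffix into $\ptext[j:]$ with $\LMEi(\factor(i))$ pinned to $\ptext[j]$. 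Corollary~\ref{corollary:we can chose a matching}, through Lemma~\ref{lemma:ts}, is precisely what makes tracking this single number sufficient: a full matching of $\sigma$ exists if and only if it can be assembled greedily, always committing to the boundary match that is extremal for the next factor, so no finer information about the partial matching needs to be remembered.

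Next I would set up the recurrence. Take $\factor(i)$ an ascent factor of length $\ell_i$ (the descent case follows by the complement symmetry that exchanges $213$ with $231$ and ascent with descent factors). To realise a matching with $\LMEi(\factor(i))$ at $\ptext[j]$, one selects an increasing run of length $\ell_i$ starting at $\ptext[j]$ and ending at some position $q$, then places $\LMEi(\factor(i-1))$ at a position $q' > q$ whose value exceeds $\ptext[q]$ --- it must dominate every matched value of $\factor(i)$, being the suffix maximum --- and from which the remaining suffix is still matchable, i.e.\ $\SET{\ppattern}{\ptext}{i-1}{q'}$ is non-empty under the value constraint inherited from $\LM{\ppattern}{\ptext}{i-1}{q'}$. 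Then $\LM{\ppattern}{\ptext}{i}{j}$ is the minimum of $\ptext[q']$ over all admissible $q'$. Correctness follows by induction on $i$: the base case $i=1$ is a single monotone-run match, and the inductive step is exactly the legality of the greedy replacement guaranteed by Lemma~\ref{lemma:ts}, so the extremal boundary recorded in the table is extensible precisely when some matching is. The answer is ``yes'' if and only if $\SET{\ppattern}{\ptext}{m}{j}$ is non-empty (equivalently $\LM{\ppattern}{\ptext}{m}{j}$ is defined) for some admissible starting position $j$ of the leftmost factor.

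For the complexity, the table carries one record per factor and per starting position, but to chain factor $i$ onto factor $i-1$ one also needs the endpoint of the monotone run, and propagating this endpoint information across positions is what pushes the stored table to $O(kn^2)$ space. The $n^2\log\log n$ term comes from the monotone-feasibility queries: testing and extending increasing/decreasing runs can be supported by the van-Emde-Boas-style structure underlying the $O(|\pi|\log\log|\sigma|)$ increasing-pattern algorithm of \cite{Crochemore:Porat:2010}, which over all starting positions contributes $O(n^2\log\log n)$, while the remaining bookkeeping of the factor DP is $O(kn^2)$; taking the maximum yields the stated bound. The step I expect to be the main obstacle is exactly this interface: computing, within budget, the extremal admissible value $\LM{\ppattern}{\ptext}{i}{j}$ while simultaneously honouring the positional constraint $q'>q$, the value-domination constraint, and the feasibility of the downstream suffix --- that is, arranging the monotone-matching subroutine so that it answers these \emph{constrained} extremal queries without incurring an extra $\ell_i$ or logarithmic factor per cell.
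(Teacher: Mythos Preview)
Your plan is essentially the paper's own proof: the same table $\LM{\ppattern}{\ptext}{i}{j}$, the same right-to-left scan of the factor decomposition, and the same appeal to Lemma~\ref{lemma:ts} / Corollary~\ref{corollary:we can chose a matching} to justify that recording a single extremal value per cell suffices.

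Two small points where the paper is sharper than your sketch. First, the obstacle you flag at the end is resolved in the paper not by separating ``find the run up to $q$'' from ``check $q'>q$ with value domination'', but by folding the value constraint directly into the monotone query: one precomputes $LIS_\pi(j,j',\text{bound})$, the length of the longest increasing subsequence in $\pi[j:j']$ starting at $j$ with all elements below \emph{bound}, and then the recurrence tests $|\factor(i)|\le LIS_\pi(j,j',\LM{\ppattern}{\ptext}{i-1}{j'+1})$. The bound is thus not $\pi[q']$ but the already-computed $\LMa$ value of the next factor, which is exactly the (maximised) minimum of the downstream match --- this is what guarantees compatibility of the whole suffix, not just the pivot element. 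Second, the $O(n^2\log\log n)$ precomputation of these bounded $LIS/LDS$ tables is obtained from~\cite{Bespamyatnikh00enumeratinglongest}, not from the Crochemore--Porat increasing-pattern algorithm you cite; the latter gives $O(n\log\log k)$ for a single increasing pattern and does not directly yield the three-argument table you need here.
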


\begin{proof}

We first introduce a set of values needed to our proof.
Let $LIS_{\ptext}(i,j,bound)$ (resp. $LDS_{\ptext}(i,j,bound)$) be the longest increasing 
(resp. decreasing) sequence in $\ptext[i:j]$ starting at $i$,
with every element of this sequence
smaller (resp. bigger) than $bound$.
$LIS_{\ptext}$ and $LDS_{\ptext}$ can be computed in 
$O(n^2\log(\log(n)))$ time (see \cite{Bespamyatnikh00enumeratinglongest}).
As stated before, those values allow us to find a matching of a factor.

Now consider the following set of values (see figure \ref{fig: algo 1 fig}) :
$$
\LM{\ppattern}{\ptext}{i}{j} =
\begin{cases}

	\text{The match of $\LMEi(\factor(i-1))$ } 
		& \text{If $\factor(i)$ is} \\
	\text{ in a matching of $\sigma[\LMEi(\factor(i)):]$ in $\pi[j:]$} 
		& \text{an ascent factor} \\	
	\text{ and that minimizes the match } & \\
	\text{ $\LMEi(\factor(i-1))$} & \\
	\text{ and starts with $\ptext[j]$} & \\
	\text{Or $\infty$ if no such matching exists} & \text{} \\ 
		
	&\\

	\text{The match of $\LMEi(\factor(i-1))$ } 
	& \text{If $\factor(i)$ is} \\
	\text{ in a matching of $\sigma[\LMEi(\factor(i)):]$ in $\pi[j:]$} 
	& \text{a descent factor} \\	
	\text{ and that maximizes the match } & \\
	\text{ $\LMEi(\factor(i-1))$} & \\
	\text{ and starts with $\ptext[j]$} & \\
	\text{Or $\infty$ if no such matching exists} & \text{} \\

\end{cases}
$$

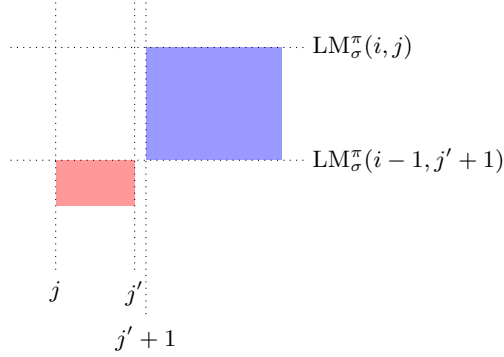
\begin{figure}[t] 
	\centering	
	\begin{tikzpicture}[scale=.3]
	
	\fill[red!40!white] (2,3) rectangle (5.5,5);
	\fill[blue!40!white] (6,10) rectangle (12,5);

	\draw[dotted,-] (0,10) -- (13,10) node[right] {$\LM{\ppattern}{\ptext}{i}{j}$};
	
	\draw[dotted,-] (0,5) -- (13,5) node[right] {$\LM{\ppattern}{\ptext}{i-1}{j'+1}$};

	\draw[dotted,-] (2,12) -- (2,0) node[below] {$j$};
	\draw[dotted,-] (5.5,12) -- (5.5,0) node[below] {$j'$};
	\draw[dotted,-] (6,12) -- (6,-2) node[below] {$j'+1$};
	
	%	\draw (6.2,2.2) -- (6.5,10);
	%	\path (6.2,10) node[above] {$x$};
	
	\end{tikzpicture}

	\caption{
		When looking for a matching of $\sigma[\LMEi(\factor(i)):]$ (red and blue areas) starting at $\pi[j:]$, (1) the red area is a matching of $\factor(i)$ ie. if and only if
		its contains an increasing subsequence of size equal or bigger than $|\factor(i)|$.
		(2) the red and blue areas have to be "compatible" (their x-coordinates and y-coordinates have to be disjoint and the red area has to be before the blue area in the x-coordinate and y-coordinate).
		$\AF{\ppattern}{\ptext}{i}{j}$ is the set of every top y-coordinate of every blue area which is compatible. $\LM{\ppattern}{\ptext}{i}{j}$ is the minimal value of $\AF{\ppattern}{\ptext}{i}{j}$ if he set is not void.}	
	\label{fig: algo 1 fig}	
\end{figure}

Clearly there exists a matching of $\ppattern$ in $\ptext$ if and only if
there exists a $1 \leq i \leq n$ such that $LM(n_{factors},i)$ $\neq$ $0$ and $LM(n_{factors},i)\neq \infty$
with $n_{factors}$ the number of factor in $\ppattern$.
We show how to compute recursively those values.

\noindent\textbf{BASE :}
\begin{align*}
\LM{\ppattern}{\ptext}{1}{j} 
&=
\begin{cases}
		\min_{j<j'}\{\infty\} \cup \{\ptext[j'] |\text{ $j'$ such that \indent } 
			& \text{If $\factor(i)$ is}  \\
		\text{ $|\factor(1)| \leq LIS(j,j',\ptext[j']+1)  $}\}
			&\text{ an ascent factor} \\
		&\\
		\max_{j<j'}\{0\} \cup \{\ptext[j'] |\text{ $j'$ such that \indent } 
			& \text{If $\factor(i)$ is}  \\
		\text{ $|\factor(1)| \leq LDS(j,j',\ptext[j']-1) $}\}
			&\text{ a descent factor}\\
\end{cases}
\\
\intertext{
In the base case,
one is looking for a matching of the first factor.
}
\\
\intertext{\textbf{STEP :}}
\LM{\ppattern}{\ptext}{i}{j} 
&=
\begin{cases}
	\min \{\infty\} \cup  \AF{\ppattern}{\ptext}{i}{j} &
	\text{If $\factor(i)$ is an ascent factor}\\
	\max \{0\} \cup  \DF{\ppattern}{\ptext}{i}{j} &
	\text{If $\factor(i)$ is a descent factor}\\
\end{cases}
\end{align*}
where $\AF{\ppattern}{\ptext}{i}{j}$ and $\DF{\ppattern}{\ptext}{i}{j}$ are the sets 
of elements matching the left most element of $\factor(i-1)$ in a match of $\sigma[\LMEi(\factor(i)):]$ starting at $\ptext[j:]$.
Suppose that $\factor(i)$ is an ascent (resp descent) factor, index $j'$ and matching $t$ exists if and only if
$\ptext[j:j']$ contains a matching of $\factor(i)$ "compatibles" with a matching in $\ptext[j'+1:]$ of $\sigma[\LMEi(\factor(i-1)):]$. It is enough to assure that every element of the matching of $\factor(i)$ are smaller (resp. bigger) than the element of the matching of $\sigma[\LMEi(\factor(i-1)):]$. 

Thus we can compute $\AF{\ppattern}{\ptext}{i}{j}$ and $\DF{\ppattern}{\ptext}{i}{j}$ as follows:
\begin{align*}
\AF{\ppattern}{\ptext}{i}{j}
&=
\text{$\{\ptext[j'+1] \;|\; j<j'<n$ and $\LM{\ppattern}{\ptext}{i-1}{j'+1} \neq 0$ and} 
\\
&\qquad 
\text{$|\factor(i)| \leq LIS_{\ptext}(j,j',\LM{\ppattern}{\ptext}{i-1}{j'+1})\}$} 
\\
\DF{\ppattern}{\ptext}{i}{j}
&=
\text{$\{\ptext[j'+1] \;|\; j<j'<n$ and $\LM{\ppattern}{\ptext}{i-1}{j'+1} \neq \infty$ and}
\\
&\qquad
\text{$|\factor(i)| \leq LDS_{\ptext}(j,j',\LM{\ppattern}{\ptext}{i-1}{j'+1})\}$}
\end{align*}

The number of factor is bound by $k$.
Every instance of $LIS_{\ptext}$ and $LDS_{\ptext}$ can be computed in $O(n^2\log(\log(n))$.
There are $n$ base cases that can be computed $O(n)$ time, thus computing every base cases
takes $O(n^2)$ time.
There are $kn$ different instance of $\AFa$ and each one of them take $O(n)$ time to compute, 
thus computing every instance of $\AFa$ takes $O(kn^2)$ time.
There are $kn$ different instance of $\LMa$ and each one of them take $O(n)$, because the size of an $\AFa$
is bounded by $n$, thus computing every $\LMa$ takes $O(kn^2)$ time.
Thus computing all the values takes $O(max(kn^2,n^2\log(\log(n)))$.
Every value takes $O(1)$ space, thus the whole problem takes $O(kn^2)$ space.
\qed
\end{proof}

%%%%%%%%%%%%%%%%%%%%%%%%%%%%%%%%%%%%%%%%%%%%%%%%%%%%%%%%%%%%%%%%%%%%%%%%%%%%%%%

\section{$(213,231)$-avoiding bivincular patterns}
	\label{section:bivincular}

This section is devoted to the pattern matching problem with $(213,231)$-avoiding bivincular pattern.
Recall that a bivincular pattern generalises a permutation pattern by
being able to force elements to be consecutive in value or in position.
Hence,  bivincular pattern is stronger in constraint than permutation pattern, intuitively we can not use the previous algorithm.
As in a $(213,231)$-avoiding permutation, we can describe structural property of a $(213,231)$-avoiding bivincular pattern.

\begin{lemma}
\label{lemma:ascentDescentAscent}
Given $\ppattern$ a $(213,231)$-avoiding bivincular pattern,
If $\overline{\sigma[i]\sigma[j]}$ (this implies that $\sigma[j]=\sigma[i]+1$) and 
if $\sigma[i]$ is an ascent (resp. decent) element and $\sigma[i]+1$ is an ascent 
(resp. decent) element then :  
\begin{itemize}
	\item $i<j$ (resp. $j>i$)
	\item For every $l$, $i<l<j$ (resp. $j>l>i$), $\pi[l]$ is a descent (resp. ascent) element.    
\end{itemize}
\end{lemma}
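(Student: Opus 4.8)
The plan is to derive everything from the order structure of $(213,231)$-avoiding permutations recorded in Corollary~\ref{corollary:max is ascent}, namely that the ascent elements of $\sigma$ are exactly its LRMin and the descent elements are exactly its LRMax. The single fact I need is a monotonicity statement: reading $\sigma$ from left to right, the \emph{values} of the ascent (LRMin) elements strictly increase, while the values of the descent (LRMax) elements strictly decrease. This is immediate from the definitions: if $a<b$ and both $\sigma[a],\sigma[b]$ are LRMin, then $\sigma[a]=\min(\sigma[a:])<\sigma[b]$ since $\sigma[b]\in\sigma[a:]$ and the entries are distinct; the LRMax case is symmetric with $\max$ in place of $\min$. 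First I would isolate this as a one-line claim, since both bullet points fall out of it.

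For the ascent case, I would argue as follows. By hypothesis $\sigma[i]$ and $\sigma[j]=\sigma[i]+1$ are both ascent elements, hence both LRMin, and $\sigma[i]<\sigma[j]$. Since LRMin values increase with position, the smaller value must occur earlier, giving $i<j$ (the first bullet). For the second bullet, suppose toward a contradiction that some $l$ with $i<l<j$ is an ascent element. Then $\sigma[l]$ is a LRMin lying strictly between positions $i$ and $j$, so by monotonicity $\sigma[i]<\sigma[l]<\sigma[j]=\sigma[i]+1$; but no integer lies strictly between $\sigma[i]$ and $\sigma[i]+1$, a contradiction. Hence every $\sigma[l]$ with $i<l<j$ is a descent element.

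The descent case is exactly symmetric: with $\sigma[i],\sigma[j]=\sigma[i]+1$ both descent (LRMax) elements and descent values decreasing with position, the larger value $\sigma[j]$ occurs earlier, so $j<i$, and an intermediate descent element $\sigma[l]$ with $j<l<i$ would force $\sigma[i]<\sigma[l]<\sigma[i]+1$, again impossible, so every such $\sigma[l]$ is an ascent element. (This also repairs the evident typographical slips in the statement, where the two ``resp.'' alternatives should read $j<i$ and $j<l<i$.) I would finally note that the overline hypothesis $\overline{\sigma[i]\sigma[j]}$ enters only through the consequence $\sigma[j]=\sigma[i]+1$ that it records; the structural conclusion itself holds for any pair of consecutive values that are both ascents or both descents. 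I do not expect a genuine obstacle here—the whole argument rests on the monotonicity claim, and the only real care needed is in keeping the two symmetric ``resp.'' cases straight.
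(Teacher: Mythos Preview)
Your proof is correct and follows essentially the same route as the paper's: both rely on the monotonicity of ascent (resp.\ descent) element values along positions to force the contradiction $\sigma[i]<\sigma[l]<\sigma[i]+1$ for any intermediate $l$ of the same type. Your version is simply more explicit---you justify the monotonicity via the LRMin/LRMax characterisation of Corollary~\ref{corollary:max is ascent}, prove the first bullet $i<j$ separately (the paper leaves it implicit), and correctly flag the ``resp.'' typos in the statement.
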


This lemma states that if two ascent (resp. decsent) elements
need to be matched to consecutive elements in value then every element between those two elements (if any) is a descent (resp. ascent) element. 

\begin{proof}[of Lemma~\ref{lemma:ascentDescentAscent}]
	Suppose that there exists $l$, $i<l<j$,
	such that $\ppattern[l]$ is ascent. Ascent elements are increasing  so $\sigma[i]<\sigma[l]<\sigma[j]$ which is in contradiction with $\sigma[j]=\sigma[i]+1$.
	We use a similar argument if $\sigma[i]$ is a descent element
	\qed
\end{proof}

\begin{proposition}
\label{Proposition:bivincular pattern}
Let $\sigma$ be a $(213,231)$-avoiding bivincular of length $k$
and $\pi \in S_n$.
One can decide in $O(kn^4)$ time
and $O(kn^3)$ space if $\pi$ matches $\sigma$.
\end{proposition}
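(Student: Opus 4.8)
The plan is to generalize the dynamic program behind Proposition~\ref{Proposition:sigma avoids 213 and 231}. As there, I decompose $\sigma$ into its factors $\factor(i)$, each a maximal monotone (all–ascent or all–descent) run, and I build matchings of the suffixes $\factor(i)\,\factor(i-1)\cdots\factor(1)$ by adding one factor on the left at a time. The essential new difficulty is that the value- and position-adjacency forbidden areas break the greedy-replacement principle of Lemma~\ref{lemma:ts}: the value-optimal continuation recorded by $\LM{\ppattern}{\ptext}{i}{j}$ may land a matched element inside a forbidden area that is required to be empty, so a single optimized value no longer certifies matchability. I therefore enrich the state so that every forbidden area can be checked locally, from the recorded information alone.

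Concretely, I would replace $\LM{\ppattern}{\ptext}{i}{j}$ by a boolean table whose state consists of the factor index $i$ together with enough corner coordinates of a matching of $\factor(i)\cdots\factor(1)$ in $\pi$ to pin down its bounding box: the position $j$ at which the leftmost element of $\factor(i)$ is matched, and the position and value of the extreme matched element (the maximum if $\factor(i)$ is an ascent factor, the minimum if it is a descent factor), this last being exactly the quantity that $\LM$ previously \emph{optimized} but that I now keep as a \emph{free parameter}. Since $i$ ranges over the $O(k)$ factors and each coordinate over $\{1,\dots,n\}$, this yields $O(kn^3)$ entries and gives the claimed space bound.

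Each entry is filled by the same gluing step as before, now augmented with adjacency checks. To match one factor I still locate a monotone subsequence inside a window via the precomputed $LIS_{\ptext}$ / $LDS_{\ptext}$ tables of Proposition~\ref{Proposition:sigma avoids 213 and 231}; a position-adjacency constraint interior to a factor additionally forces the corresponding matched positions to be consecutive, and by Lemma~\ref{lemma:ascentDescentAscent} two value-adjacent same-direction elements sandwich only opposite-direction ones, so an interior value-adjacency constraint likewise localizes and is checkable in $O(1)$ from precomputed data. At the junction between the newly added $\factor(i)$ and the previous $\factor(i-1)$ I enforce the straddling forbidden areas: a position-adjacency area forces the rightmost matched position of $\factor(i)$ and the leftmost matched position $j$ of the inner suffix to be consecutive in $\pi$, while a value-adjacency area forces the relevant matched values to be consecutive; both tests read off the corner coordinates I track. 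The boundary conditions $\llcorner,\lrcorner$ (start/end at the extreme positions of $\pi$) and $^\ulcorner{1}$, $k^\urcorner$ (contain the extreme values of $\pi$) are imposed by restricting the corner coordinates of the first and last matched elements. Scanning the $O(n)$ admissible split positions for each entry gives $O(n)$ time per entry, hence $O(kn^4)$ overall.

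The main obstacle is correctness, not running time. I must prove that the recorded corner coordinates are \emph{sufficient} to certify that every forbidden area is empty — in particular, that value-adjacency constraints, which couple the value axis across possibly non-adjacent factors, are fully captured by the structural restriction of Lemma~\ref{lemma:ascentDescentAscent} and never require remembering the entire matched point set. Equivalently, I must design the junction compatibility predicate so that it rejects exactly the matchings that meet a forbidden area, discarding no legal matching and admitting no illegal one; verifying this exhaustively over the interaction of the two adjacency types with the ascent/descent factor structure is the delicate part, and it is precisely this extra bookkeeping that forces the additional factors of $n$ in time and space over the non-bivincular algorithm.
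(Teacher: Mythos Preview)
Your approach differs from the paper's and is more complicated. The paper abandons the factor-by-factor strategy of Section~\ref{section:sigma only avoid 231 and 213} entirely and instead processes $\sigma$ \emph{element by element}. Its state is the boolean $\PM{\sigma}{\ptext}{\lb}{\ub}{i}{j}$: does $\pi[j:]$ contain a matching of $\sigma[i:]$ that starts at $\pi[j]$ and lies entirely in the value interval $[\lb,\ub]$? Since $\sigma[i]$ is always the minimum or the maximum of $\sigma[i:]$ (Lemma~\ref{lemma:first element is 1 or n}), matching it to $\pi[j]$ simply tightens one bound ($\lb \gets \pi[j]+1$ for an ascent, $\ub \gets \pi[j]-1$ for a descent). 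The bivincular constraints then become purely local tests: $\underline{\sigma[i]\sigma[i+1]}$ forces the next position to be $j+1$, while an overline forces $\pi[j]=\lb$ or $\pi[j]=\ub$; Lemma~\ref{lemma:ascentDescentAscent} guarantees that between two value-adjacent same-direction elements the relevant bound is not overwritten, so the bound recorded in the state is exactly the one that must be hit. This yields $kn^3$ entries, each filled with an $O(n)$ disjunction over the next position, hence the stated bounds.

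Your factor-level plan has a genuine gap at the intra-factor step. You propose to match an entire factor ``via the precomputed $LIS_{\ptext}/LDS_{\ptext}$ tables'' and then assert that interior position- and value-adjacencies are ``checkable in $O(1)$ from precomputed data''. But the $LIS/LDS$ tables of Section~\ref{section:sigma only avoid 231 and 213} only give lengths of \emph{unconstrained} monotone subsequences. A single ascent factor may carry an arbitrary mixture of underlines and overlines among its consecutive elements, and matching it then asks for an increasing subsequence of $\pi$ in which some prescribed pairs sit at consecutive positions and other prescribed pairs at consecutive values; a bare $LIS$ length cannot certify this, and you never say what the $O(1)$ check actually is. To enforce these interior constraints you are essentially forced to break the factor down further and match element by element --- at which point the natural state is precisely the value window $(\lb,\ub)$, not the bounding-box corners you propose. (Incidentally, ``position and value of the extreme matched element'' is redundant, since the value is $\pi$ at that position; as written, your state has only two independent $n$-ranged coordinates besides the factor index, not three.)
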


\begin{proof}

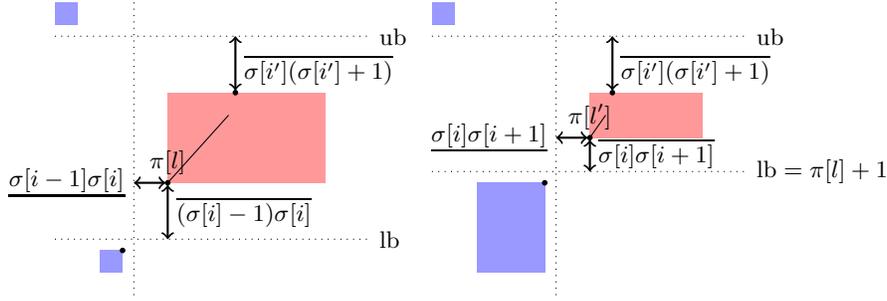
\begin{figure}[t] 
	\centering	
	\begin{tikzpicture}[scale=.3]
	\fill[blue!40!white] (0,12) rectangle (1,11);
	\draw[dotted,-] (0,10.5) -- (14,10.5) node[right] {$\ub$};

	\fill[blue!40!white] (2,0) rectangle (3,1);
	\draw[dotted,-] (0,1.5) -- (14,1.5) node[right] {$\lb$};
	
	\draw[fill=black] (3,1) circle (1mm) ;
	
	\draw[dotted,-] (3.5,12) -- (3.5,-1);
	
	% le reste du match
	\fill[red!40!white] (5,8) rectangle (12,4);
	\draw (5,4) -- (7.7,7);
	
	%point du bot
	\draw[fill=black] (5,4) circle (1mm) node[above] {$\pi[l]$} ;
	
	\draw[thick,<->] (5,3.9) -- (5,1.5) ;	
	\draw (5,2.7) node[right] {$\overline{(\sigma[i]-1)\sigma[i]}$};
	
	\draw[thick,<->] (4.9,4) -- (3.5,4)  node[left] {$\underline{\sigma[i-1]\sigma[i]}$};	
	
	% point du top
	\draw[fill=black] (8,8) circle (1mm) ;
	\draw[thick,<->] (8,8.1) -- (8,10.5) ;	
	\draw (8,9) node[right] {$\overline{\sigma[i'](\sigma[i']+1)}$};

	\end{tikzpicture}		
	\begin{tikzpicture}[scale=.3]
	\fill[blue!40!white] (0,12) rectangle (1,11);
	\draw[dotted,-] (0,10.5) -- (14,10.5) node[right] {$\ub$};
	
	\fill[blue!40!white] (2,0) rectangle (5,4);
	\draw[dotted,-] (0,4.5) -- (14,4.5) node[right] {$\lb=\pi[l]+1$};
	
	%\draw[fill=black] (3,1) circle (1mm) node[left] {$\pi[j]$};
	
	\draw[dotted,-] (5.5,12) -- (5.5,-1);
	
	% le reste du match
	%\draw[dashed,red] (5,8) rectangle (12,4);
	\fill[red!40!white] (7,6) rectangle (12,8);
	\draw (7,6) -- (7.7,7);

	\draw[fill=black] (5,4) circle (1mm) ;
	
	%point du bot
	\draw[fill=black] (7,6) circle (1mm) node[above] {$\pi[l']$};

	\draw[thick,<->] (7,5.9) -- (7,4.5) ;	
	\draw (7,5.3) node[right] {$\overline{\sigma[i]\sigma[i+1]}$};
	
	\draw[thick,<->] (5.5,6)   node[left] {$\underline{\sigma[i]\sigma[i+1]}$} -- (6.9,6) ;	
	
	% point du top
	\draw[fill=black] (8,8) circle (1mm) ;
	\draw[thick,<->] (8,8.1) -- (8,10.5) ;	
	\draw (8,9) node[right] {$\overline{\sigma[i'](\sigma[i']+1)}$};

	\end{tikzpicture}	
	
	\caption{When matching an ascent element, the value of $ub$ does not change,
		and the lb is equal to the last match plus one. Note that there is only ascent element before $\sigma[i']$}	
	\label{fig:ub dont change}
\end{figure}

We consider the following set of boolean :
Given $\ppattern$ a $(213,231)$-avoiding bivincular pattern, and a text $\ptext \in \Av{n}{231,213}$,
$$
\PM{\sigma}{\ptext}{\lb}{\ub}{i}{j}=
\begin{cases}
	true 	& \text{If $\ptext[j:]$ matches $\sigma[i:]$ }\\
			& \text{with every element of the matching is in $[\lb,\ub]$}\\
			& \text{and starting $\pi[j]$}\\
	false 	& otherwise\\
\end{cases}
$$

The argument $\lb$ (respectively $\ub$) stands for 
the match of the last ascent (resp. decsent) element matched
plus (resp. minus) one.
We now show how to compute recursively those boolean (see fig. \ref{fig:ub dont change}).

\noindent\textbf{BASE:} \\
$$
\PM{\sigma}{\ptext}{\lb}{\ub}{k}{j}=
\begin{cases}
	true 	& \text{if $\ptext[j] \in [\lb,\ub ]$}\\
			& \text{and if ${\ppattern[k]}_\lrcorner$ then $j=n$}\\
			& \text{and if ${\ppattern[k]}^\urcorner$ then $\ptext[j]=\ub=k$}\\
			& \text{and if  $^\ulcorner{\ppattern[k]}$ then $\ptext[j]=\lb=1$ } \\
			& \text{and if  $\overline{(\ppattern[k]-1)\ppattern[k] }$ then $\ptext[j]=\lb$ }  \\
			& \text{and if  $\overline{(\ppattern[k]\ppattern[k]-1)}$ then $\ptext[j]=\ub$}  \\

	false	& \text{otherwise} \\
\end{cases}
$$

The base case finds an matching for the rightmost element of the pattern. If the last element does not have any restriction on positions and on values, then $\PM{\sigma}{\ptext}{\lb}{\ub}{k}{j}$ is true if and only if $\ppattern[k]$ is matched $\ptext[j]$. Which is true if $\ptext[j] \in [\lb,\ub]$. If ${\ppattern[k]}_\lrcorner$ then $\ppattern[k]$ must be matched to the right most element of $\pi$ thus $j$ must be the $n$. If ${\ppattern[k]}^\urcorner$ then $\ppattern[k]$ must be matched to the largest element which is $k$. If $^\ulcorner{\ppattern[k]}$ then $\ppattern[k]$ must be matched to the smallest element which is $1$. 
If  $\overline{(\ppattern[k]-1)\ppattern[k] }$ then the matched element of $\ppattern[k]$ and $\ppattern[k]-1$ must be consecutive in value, by recursion the value of the matched element of $\ppattern[k]-1$ will be recorded in $\lb$ and by adding $1$ to it thus $\ppattern[k]$ must be matched to $\lb$. 
If  $\overline{(\ppattern[k]\ppattern[k]-1)}$ then the matched element of $\ppattern[k]$ and $\ppattern[k]-1$ must be consecutive in value, by recursion the value of the matched element of $\ppattern[k]-1$ will be recorded in $\ub$ and by removing $1$ to it thus $\ppattern[k]$ must be matched to $\ub$. 

\noindent\textbf{STEP:}

We need to consider 3 cases for the problem $\PM{\sigma}{\ptext}{\lb}{\ub}{i}{j}$ :
\begin{itemize}
	\item If $\ptext[j] \notin [\lb,\ub]$ then :
	$$
	\PM{\sigma}{\ptext}{\lb}{\ub}{i}{j} = false
	$$

	which is immediate from the definition. If $\ptext[j] \notin [\lb,\ub]$ then it can not be part of a matching of  $\ppattern[i:]$ in $\ptext[i:]$ with every matched element in $[\lb,\ub]$.
	
	\item If $\ptext[j] \in [\lb,\ub]$ and $\ppattern[i]$ is an ascent element then :
	$$
	\PM{\sigma}{\ptext}{\lb}{\ub}{i}{j}=
	\begin{cases}
		\bigcup_{j<l} \PM{\sigma}{\ptext}{\ptext[j]+1}{\ub}{i+1}{l}
			& \text{if $\ppattern[i]$ is not underlined } \\
			& \text{ and $\ppattern[i]$ is not overlined} \\
		\bigcup_{j<l} \PM{\sigma}{\ptext}{\ptext[j]+1}{\ub}{i+1}{l}
			& \text{if $\ppattern[i]$ is not underlined } \\
			& \text{ and $\overline{(\ppattern[i]-1)\ppattern[i]}$ or $^\ulcorner{\ppattern[i]}$}\\
			& \text{ and $\ptext[j]=\lb$} \\
		\PM{\sigma}{\ptext}{\ptext[j]+1}{\ub}{i+1}{j+1}
			& \text{if $\underline{\ppattern[i]\ppattern[i+1]}$ } \\
			& \text{ and $\overline{(\ppattern[i]-1)\ppattern[i]}$ or $^\ulcorner{\ppattern[i]}$}\\
			& \text{ and $\ptext[j]=\lb$} \\
		\PM{\sigma}{\ptext}{\ptext[j]+1}{\ub}{i+1}{j+1}
			& \text{if $\underline{\ppattern[i]\ppattern[i+1]}$ } \\
			& \text{ and $\ppattern[i]$ is not overlined} \\
		false & \text{otherwise}

	\end{cases}
	$$	
	Remark that $\ptext[j]$ can be matched to $\ppattern[i]$ because $\ptext[j] \in [\lb,\ub]$.
	Thus if $\ptext[j+1:]$ matches $\ppattern[i+1:]$  with every element of the matching in $[\ppattern[i]+1,\ub]$ then 
 	$\ptext[j:]$ matches $\ppattern[i:]$. The last condition correspond 
 	to know if there exists $l$, $j<l$ such that
	$\PM{\sigma}{\ptext}{\ptext[j]+1}{\ub}{i+1}{l}$ is true.
	The first case correspond to a matching without restriction on position and on value. The second case asks for the match of $\ppattern[i]-1$ and $\ppattern[i]$ to be consecutive in value, but the match of $\ppattern[i]-1$
	is $lb-1$ thus we want $\ptext[j]=\lb$. The fourth case asks for the match of $\ppattern[i]$ and $\ppattern[i+1]$ to be consecutive in index, thus the match of $\ppattern[i+1]$ must be $j+1$. The third case is an union of the second and fourth case. Note that if $\pi[j]$ is an ascent element we can not have the condition that the match of $\ppattern[i]$ and $\ppattern[i]+1$ have to consecutive in value.
	
	\item If $\ptext[j] \in [\lb,\ub]$ and $\ppattern[i]$ is a descent element then :

	$$
	\PM{\sigma}{\ptext}{\lb}{\ub}{i}{j}=
	\begin{cases}
			\bigcup_{j<l} \PM{\sigma}{\ptext}{\lb}{\ptext[j]-1}{i+1}{l}
				& \text{if $\ppattern[i]$ is not underlined } \\
				& \text{ and $\ppattern[i]$ is not overlined} \\
			\bigcup_{j<l} \PM{\sigma}{\ptext}{\lb}{\ptext[j]-1}{i+1}{l}
				& \text{if $\ppattern[i]$ is not underlined } \\
				& \text{ and $\overline{\ppattern[i](\ppattern[i]+1)}$ or ${\ppattern[i]}^\urcorner$}\\
				& \text{ and $\ptext[j]=\ub$} \\
			\PM{\sigma}{\ptext}{\lb}{\ptext[j]-1}{i+1}{j+1}
				& \text{if $\underline{\ppattern[i]\ppattern[i+1]}$ } \\
				& \text{ and $\overline{\ppattern[i](\ppattern[i]+1)}$ or ${\ppattern[i]}^\urcorner$}\\
				& \text{ and $\ptext[j]=\ub$} \\
			\PM{\sigma}{\ptext}{\lb}{\ptext[j]-1}{i+1}{j+1}
				& \text{if $\underline{\ppattern[i]\ppattern[i+1]}$ } \\
				& \text{ and $\ppattern[i]$ is not overlined} \\
			false & \text{otherwise}
	\end{cases}
	$$	
	The same remark as the last case holds.

\end{itemize}

Clearly if $\bigcup_{0<j} \PM{\sigma}{\ptext}{1}{n}{1}{j}$ is true then $\pi$ matches $\sigma$. We now discuss the position and value constraints.
\paragraph{Position Constraint.} There are 3 types of position constraints that can be added by underlined elements.
\begin{itemize}
	\item If $_\llcorner{\sigma[1]}$ then the leftmost element of $\sigma$  must be matched to the leftmost element of $\pi$ ($\ppattern[1]$ is matched to $\ptext[1]$ on a matching of $\ppattern$ in $\ptext$). This constraint is satisfied by requiring that the matching starts at the left most element of $\ptext$ : if  $\PM{\sigma}{\ptext}{1}{n}{1}{1}$ is true.
	
	\item If ${\ppattern[k]}_\lrcorner$ then the rightmost element $\sigma$ must be matched the rightmost element of $\pi$ ($\ppattern[k]$ is matched to $\ptext[n]$ on a matching of $\ppattern$ in $\ptext$). This constraint is checked in the base case.

	\item If $\underline{\ppattern[i]\ppattern[i+1]}$ then the index of the matched elements of $\ppattern[i]$ and $\ppattern[i+1]$ must be consecutive. In other word, if $\ppattern[i]$ is matched to $\ptext[j]$ then $\ppattern[i+1]$ must be matched to $\ptext[j+1]$. We assure this restriction by recursion by requiring that the matching of $\ppattern[i+1:]$ to start at index $j+1$ (see figure $\ref{fig:ub dont change}$).
\end{itemize}

\paragraph{Value Constraint.} There are 3 types of position constraints that can be added by overlined elements.
\begin{itemize}
	\item If $^\ulcorner{\sigma[i]}$ (and thus $\sigma[i]=1$) then the minimal value of $\ppattern$ must be matched to the minimal value of $\ptext$.
	\begin{itemize}

		\item If $\sigma[i]$ is an ascent element, then remark that 
		every problem 
		$\PM{\sigma}{\ptext}{\lb}{*}{i}{*}$ is true if $\sigma[i]$ is matched to element with value $\lb$ (by recursion) thus it is enough to require that $\lb=1$.
		Now remark that $\sigma[i]$ is the leftmost ascent element, indeed if not, then there exists an ascent element $\sigma[i']$, $i'<i$ and by definition $\sigma[i']<\sigma[i]$ which is not possible as $\sigma[i]$ must be the minimal element. 
		As a consequence $\sigma[1]$, $\ldots$, $\sigma[i-1]$ are descent elements.
		Moreover the recursive call from a descent element does not modified the lower bound
		thus for every $\PM{\sigma}{\ptext}{\lb}{*}{i}{*}$, $\lb=1$ (see figure $\ref{fig:ub dont change}$). 			

		\item If $\sigma[i]$ is a descent element then $i=k$ ($\sigma[i]$ is the right most element). Thus every $\PM{\sigma}{\ptext}{*}{*}{i}{*}$ is a base case and is true if $\sigma[i]$ is matched to $1$.

	\end{itemize}

	\item If ${\ppattern[i]}^\urcorner$ (and thus $\sigma[i]=k$) then the maximal value of $\ppattern$ must be matched to the maximal value of $\ptext$.
	\begin{itemize}

		\item If $\sigma[i]$ is an descent element, then remark that 
		every recursive call \\ $\PM{\sigma}{\ptext}{*}{\ub}{i}{*}$ is true if $\sigma[i]$ is matched to element with value $\ub$ (by recursion) thus it is enough to require that $\ub=n_{\ptext}$.
		Now remark that $\sigma[i]$ is the leftmost descent element, indeed if not, then there exists an descent element $\sigma[i']$, $i'<i$ and by definition $\sigma[i']>\sigma[i]$ which is not possible as $\sigma[i]$ must be the maximal element.
		As a consequence $\sigma[1]$, $\ldots$, $\sigma[i-1]$ are ascent elements.
		Moreover the recursive call from a ascent element does not modified the upper bound
		thus for every $\PM{\sigma}{\ptext}{*}{\ub}{i}{*}$, $\ub=n$ (see figure $\ref{fig:ub dont change}$.

		\item If $\sigma[i]$ is an ascent element then $\sigma[i]$ then $i=k$ ($\sigma[i]$ is the right most element). Thus every $\PM{\sigma}{\ptext}{*}{*}{i}{*}$ is a base case and is true if $\sigma[i]$ is matched to $n_{\ptext}$.

	\end{itemize}

	\item  If $\overline{\ppattern[i]\ppattern[i']}$,  (which implies that $\ppattern[i']=\ppattern[i]+1$) then if $\ppattern[i]$ is matched to $\ptext[j]$ then $\ppattern[i']$ must be matched to $\ptext[j]+1$.

		\begin{itemize}

			\item The case $\ppattern[i]$ is a descent element, $\ppattern[i']$ is an ascent element and $i<i'$ (remark that this case is equivalent to the case $\ppattern[i]$ is an ascent element, $\ppattern[i']$ is a descent element and $i'<i$) is not possible. 
			Indeed $\ppattern[i]$ is the maximal element of $\ppattern[i:]$ thus $\ppattern[i] > \ppattern[i']$ which is in contradiction with 
			$\ppattern[i']=\ppattern[i]+1$. 
			
			\item If $\ppattern[i]$ is an ascent element, $\ppattern[i']$ is a descent element and $i<i'$ (remark that this case is symmetric to the case $\ppattern[i]$ is a descent element, $\ppattern[i']$ is an ascent element and $i'<i$), then
			remark that every recursive call
			$\PM{\sigma}{\ptext}{\lb}{*}{i'}{*}$ is true if $\sigma[i']$ is matched to the element with $\lb$ (by recursion) thus it is enough to require that $\lb=\ptext[j]+1$.
			Now remark that
			$\ppattern[i]$ is the right most ascent element and $\ppattern[i']$ is the right most element (or $\ppattern[i'] \neq \ppattern[i]+1$). 
			As a consequence $\sigma[i+1]$, $\sigma[i+2]$, $\ldots$, $\sigma[i'-1]$ are descent elements. Moreover the recursive call from a descent element does not modified the lower bound and $\PM{\sigma}{\ptext}{\lb}{*}{i}{*}$ will put the lower bound to $\ptext[j]+1$
			thus for every $\PM{\sigma}{\ptext}{\lb}{*}{i'}{*}$, $\lb=\ptext[j]+1$ (see figure $\ref{fig:ub dont change}$. 	
			
			\item If $\ppattern[i]$ is an ascent element and $\ppattern[i']$ is an ascent element then
			first remark that every recursive call
			$\PM{\sigma}{\ptext}{*}{\ub}{i'}{*}$ is true if $\sigma[i']$ is matched to element with value $\lb$. 
			Now remark that

			$i<i'$ and there is no  ascent element between $\ppattern[i]$ and $\ppattern[i']$ (lemma \ref{lemma:ascentDescentAscent}), 
			As a consequence $\sigma[i+1]$, $\sigma[i+2]$, $\ldots$, $\sigma[i'-1]$ are descent elements.  Moreover the recursive call from a descent element does not modified the lower bound and $\PM{\sigma}{\ptext}{\lb}{*}{i}{*}$ will put the lower bound to $\ptext[j]+1$ thus for every $\PM{\sigma}{\ptext}{\lb}{*}{i'}{*}$, $\lb=\ptext[j]+1$ (see figure $\ref{fig:ub dont change}$.
		\end{itemize}
\end{itemize}

There are $n^3$ base cases that can be computed in constant time.
There are $kn^3$ different cases. Each case takes up to $O(n)$ time to compute.
Thus computing all the cases take $O(kn^4)$ time.
Each case take $O(1)$ space, thus we need $O(kn^3)$ space.
\qed
\end{proof}

%%%%%%%%%%%%%%%%%%%%%%%%%%%%%%%%%%%%%%%%%%%%%%%%%%%%%%%%%%%%%%%%%%%%%%%%%%%%%%%

\section{Computing the longest $(213,231)$-avoiding pattern}
\label{section:LCS}

This section is focused on a problem related to the pattern matching problem, finding the longest $(213,231)$-avoiding subsequences in permutations:
Given a set of permutations, find a longest $(213,231)$-avoiding that can be matched
by each input permutation.
This problem is know to be NP-Hard for an arbitrary number of permutations and we do not hope it to be solvable in polynomial time even with the constraint of the subsequence must avoid $(213,231)$. Thus we focus on the cases where only one or two permutations are given in input.
We start with the easiest case where we are given just one input permutation.
We need the set of descent elements and the set of ascent elements.
$A(\pi) = \{i | \text{$\pi[i]$ is an ascent element} \} \cup \{n\}$ and
$D(\pi) = \{i | \text{$\pi[i]$ is a descent element} \cup \{n\}$.\\

\begin{proposition}
\label{proposition:longestIncreasingSubsequence}
If $s$ is a longest $(213,231)$-avoiding subsequence with last element at index 
$f$ in $\pi$ then
$A(\pi)$ is a longest increasing subsequence with last element at index $f$ and
$D(\pi)$ is a longest decreasing subsequence with last element at index $f$.
\end{proposition}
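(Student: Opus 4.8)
The plan is to exploit the canonical decomposition of a $(213,231)$-avoiding sequence into an increasing part and a decreasing part that share only their last element, and to show that these two parts may be chosen completely independently of one another. Throughout I read $A(s)$ and $D(s)$ as the ascent and descent elements of the matched subsequence $s$, each with the last element of $s$ (which sits at index $f$ in $\pi$) adjoined; this is the only reading under which the statement typechecks, since by Corollary~\ref{corollary:max is ascent} every non-final position of $s$ is either an ascent (a LRMin) or a descent (a LRMax).

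First I would record the internal structure of $s$ itself. Using Corollary~\ref{corollary:max is ascent}, for any two ascent elements $s[i],s[i']$ with $i<i'$ the fact that $s[i]$ is a LRMin forces $s[i]<s[i']$, so the ascents form an increasing sequence; symmetrically the descents form a decreasing one. Since the final element exceeds every LRMin and is exceeded by every LRMax occurring before it, adjoining it leaves $A(s)$ increasing and $D(s)$ decreasing, and both end at index $f$ with value $\pi[f]$. As the two sets meet exactly in that shared last element, we get $|s| = |A(s)| + |D(s)| - 1$.

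The heart of the argument is a merging claim: given \emph{any} increasing subsequence $A'$ of $\pi$ ending at index $f$ and \emph{any} decreasing subsequence $D'$ of $\pi$ ending at index $f$, their union ordered by index is a $(213,231)$-avoiding subsequence of $\pi$ ending at $f$, of length $|A'| + |D'| - 1$. The key is the value separation at $\pi[f]$: every element of $A'$ other than $\pi[f]$ has value $<\pi[f]$ and every element of $D'$ other than $\pi[f]$ has value $>\pi[f]$, so the two sets overlap only at index $f$. Consequently each element of $A'$ is strictly smaller than everything following it in the merge (later ascents are larger, and every descent exceeds $\pi[f]$), hence a LRMin, and symmetrically each element of $D'$ is a LRMax; by Corollary~\ref{corollary:minmaxelement} the merge is $(213,231)$-avoiding. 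I expect this lemma to be the main obstacle, not because it is deep but because one is tempted to hunt for a compatibility condition between $A'$ and $D'$ — the value-separation observation is precisely what shows that none is needed.

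Finally I would close with an exchange argument. If $A(s)$ were not a longest increasing subsequence ending at $f$, I would take a strictly longer one $A'$ and merge it with $D(s)$: the merging claim produces a $(213,231)$-avoiding subsequence ending at $f$ of length $|A'| + |D(s)| - 1 > |A(s)| + |D(s)| - 1 = |s|$, contradicting the maximality of $s$. The companion statement for $D(s)$ follows by a symmetric argument, most cleanly via the complement $\pi[i]\mapsto n+1-\pi[i]$, which fixes all indices, interchanges ascents with descents and increasing with decreasing subsequences, and swaps the forbidden patterns $213$ and $231$; this maps the increasing claim to the decreasing one and completes the proof.
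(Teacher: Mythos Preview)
Your proof is correct and follows essentially the same exchange argument as the paper: assume $A(s)$ is not a longest increasing subsequence ending at $f$, take a strictly longer one, merge it with $D(s)$, and contradict the maximality of $s$. The only difference is that you actually justify the merging step (via the value-separation observation at $\pi[f]$ and Corollary~\ref{corollary:minmaxelement}), whereas the paper simply asserts that $s_m \cup D(s)$ is ``clearly'' $(213,231)$-avoiding; your version is the same proof with that gap filled in.
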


\begin{proof}[of Proposition~\ref{proposition:longestIncreasingSubsequence}]
Let $s$ be a longest subsequence avoiding (213,231) with last element at index $f$ in $\pi$,
suppose that $P(s)$ is not a longest increasing subsequence with last element at index $f$. Let $s_m$ be a longest increasing subsequence with last element $f$.
Thus $|s_m|>|A(s)|$, clearly the sequence $s_m \cup D(s)$
is $(213,231)$-avoiding and is longer than $s$, this is a contradiction.
The same idea can be used to show that $D(\pi)$ is the longest decreasing subsequence.
\qed
\end{proof}

\begin{proposition}
\label{proposition:longest 2}
Let $\pi$ be a permutation. One can compute
the longest $(213,231)$-avoiding subsequence that can be matched in $\pi$
in $O(n\log(\log(n)))$ time and in $O(n)$ space.
\end{proposition}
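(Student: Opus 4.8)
The plan is to reduce the problem to two classical longest‑monotone‑subsequence computations, one increasing and one decreasing, performed once and read off at every possible right endpoint. For an index $f$, write $\ell^{+}(f)$ for the length of a longest increasing subsequence of $\pi$ whose rightmost element is $\pi[f]$, and $\ell^{-}(f)$ for the length of a longest decreasing subsequence ending at $\pi[f]$. I claim that the longest $(213,231)$-avoiding subsequence of $\pi$ ending at index $f$ has length exactly $\ell^{+}(f) + \ell^{-}(f) - 1$. The upper bound is immediate from Proposition~\ref{proposition:longestIncreasingSubsequence}: in an optimal such subsequence $s$ the ascent part is a longest increasing subsequence ending at $f$ and the descent part a longest decreasing subsequence ending at $f$, and these two parts share only the common last element $\pi[f]$, so $|s| = \ell^{+}(f) + \ell^{-}(f) - 1$.

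For the matching lower bound I would verify directly that, conversely, merging by position any longest increasing subsequence ending at $f$ with any longest decreasing subsequence ending at $f$ produces a $(213,231)$-avoiding subsequence. These two chains overlap in exactly one element, namely $\pi[f]$ (every other element of the increasing chain has value below $\pi[f]$ and every other element of the decreasing chain has value above $\pi[f]$), and all their elements lie at positions $\le f$ with $\pi[f]$ rightmost. Using Corollary~\ref{corollary:minmaxelement}, I check that the merged sequence avoids $213$ and $231$: each element taken from the increasing chain is a left‑to‑right minimum of the merge, since everything to its right (later increasing‑chain elements, the whole decreasing chain, and $\pi[f]$) is larger; symmetrically each element of the decreasing chain is a left‑to‑right maximum. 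Hence the merge is $(213,231)$-avoiding and has length $\ell^{+}(f) + \ell^{-}(f) - 1$, establishing equality and showing that the global optimum equals $\max_{1 \le f \le n}\bigl(\ell^{+}(f) + \ell^{-}(f) - 1\bigr)$.

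It then remains to compute all the values $\ell^{+}(f)$ and $\ell^{-}(f)$ efficiently and to reconstruct a witness. Since $\pi$ is a permutation of $\{1,\ldots,n\}$, its values are integers in a bounded universe, so the predecessor/successor queries driving the standard patience‑sorting computation of longest increasing (resp.\ decreasing) subsequences can be answered in $O(\log\log n)$ time by a van~Emde~Boas structure, yielding all $\ell^{+}(f)$ (resp.\ $\ell^{-}(f)$) together with back‑pointers in $O(n\log\log n)$ time and $O(n)$ space~\cite{Bespamyatnikh00enumeratinglongest}. I then scan $f$ from $1$ to $n$, keep the index maximizing $\ell^{+}(f)+\ell^{-}(f)-1$, and reconstruct the answer by following the two chains of back‑pointers from that $f$ and merging them by position. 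The whole procedure runs in $O(n\log\log n)$ time and $O(n)$ space.

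The routine work is the two monotone‑subsequence computations; the only genuine content is the structural equivalence of the first two paragraphs, and the step I expect to require the most care is confirming that the increasing and decreasing witnesses can always be glued into a single $(213,231)$-avoiding subsequence --- specifically that they share precisely the endpoint $\pi[f]$ and that the left‑to‑right extremality needed by Corollary~\ref{corollary:minmaxelement} holds for every merged element.
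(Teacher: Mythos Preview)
Your proposal is correct and follows essentially the same approach as the paper: compute, for every index $f$, the lengths of a longest increasing and a longest decreasing subsequence ending at $\pi[f]$, and take the maximum of their (adjusted) sum. Your write-up is in fact more careful than the paper's --- you make the $-1$ correction explicit, and you spell out via Corollary~\ref{corollary:minmaxelement} why an arbitrary increasing chain and an arbitrary decreasing chain ending at the same $\pi[f]$ can always be merged into a $(213,231)$-avoiding sequence, a step the paper dispatches with ``clearly'' inside the proof of Proposition~\ref{proposition:longestIncreasingSubsequence}.
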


\begin{proof}[of Proposition~\ref{proposition:longest 2}]
The proposition \ref{proposition:longestIncreasingSubsequence} lead to an algorithm
where one has to compute longest increasing and decreasing subsequence ending at every index. Then finding the maximal sum of longest increasing and decreasing subsequence ending at the same index.
Computing the longest increasing subsequence and the longest decreasing subsequence can be done in 
$O(n\log(\log(n)))$ time and $O(n)$ space 
(see \cite{Bespamyatnikh00enumeratinglongest}), 
then finding the maximal can be done in linear time.
\qed
\end{proof}

We  now consider the case where the input is composed of two permutations.

\begin{proposition}
Given two permutations $\pi_1$ and $\pi_2$,
one can compute 
the longest common $(213,231)$-avoiding subsequence
in $O(|\pi_1|^3|\pi_2|^3)$ time and space.
\end{proposition}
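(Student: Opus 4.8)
The plan is to build on the single-permutation case (Proposition~\ref{proposition:longest 2}) and the structural insight from Lemma~\ref{lemma:MatchStripeToPermutation}, namely that a $(213,231)$-avoiding permutation is completely determined (up to matching) by its binary ascent/descent word $\bijection$. A common $(213,231)$-avoiding subsequence of $\pi_1$ and $\pi_2$ corresponds to a binary word $w$ that can be realized as the ascent/descent word of some subsequence in \emph{both} $\pi_1$ and $\pi_2$. So the goal is to maximize the length of a binary word $w$ such that there is a subsequence $t_1$ of $\pi_1$ with $\bijection(t_1)=w$ and a subsequence $t_2$ of $\pi_2$ with $\bijection(t_2)=w$.

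First I would set up a dynamic program over tuples of positions. The natural state records, for each of the two permutations, the index of the last element chosen together with enough information to decide whether the next element can serve as the required ascent or descent. Because the decisive fact (by Corollary~\ref{corollary:max is ascent}) is whether a chosen element is a left-to-right maximum or minimum of the remaining suffix, the state must track a window of admissible values in each permutation: a lower and an upper bound on the value of the next matched element. Concretely I would use a state of the form $(a_1,b_1,a_2,b_2)$ recording the most recent matched position (or value) in each permutation plus the current value bound imposed by the alternation of ascents and descents. This gives $O(|\pi_1|^2|\pi_2|^2)$ states; at each state we branch on whether the next letter of $w$ is an ascent or a descent, and we search for the next admissible position in each permutation, which can be done in $O(|\pi_1|\,|\pi_2|)$ per state by scanning. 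Multiplying state count by per-state work yields the claimed $O(|\pi_1|^3|\pi_2|^3)$ time, and storing the table gives the matching space bound.

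The recursion itself mirrors the greedy/structural argument already used: once we commit the last chosen letter to be, say, an ascent, the chosen element must be a left-to-right minimum of its suffix in each permutation, so the admissible region for the next element is pinned down by the previous value in exactly the way Corollary~\ref{corollary:whereIsMax} describes. I would define $L(i,j,v_1,v_2)$ to be the length of the longest common $(213,231)$-avoiding subsequence obtainable using the suffixes $\pi_1[i:]$ and $\pi_2[j:]$ subject to value constraints $v_1,v_2$, and write a case split on ascent versus descent for the first chosen element, taking in each case the best over all valid next positions. The answer is the maximum over all starting positions and starting bounds.

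The main obstacle I expect is bookkeeping the value constraints correctly so that the subsequence we reconstruct is simultaneously $(213,231)$-avoiding in \emph{both} permutations while having the \emph{same} ascent/descent word. The subtlety is that the same abstract word $w$ must be realizable in both, so the two value-bound components of the state are coupled only through the shared choice of the next letter (ascent or descent), not through their actual numeric values; getting this coupling right — and verifying via Lemma~\ref{lemma:MatchStripeToPermutation} that agreement of the binary words is both necessary and sufficient — is the crux. Once that invariant is pinned down, correctness follows by induction on the suffix length exactly as in the proof of Lemma~\ref{lemma:MatchStripeToPermutation}, and the complexity bound follows from counting states and per-state work.
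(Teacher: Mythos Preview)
Your approach is the same dynamic programme as the paper's: build the common $(213,231)$-avoiding subsequence from left to right while maintaining, in each of $\pi_1$ and $\pi_2$, a window $[\lb,\ub]$ of admissible values for the remaining matches. The paper's state is $(i_1,\lb_1,\ub_1,i_2,\lb_2,\ub_2)$ with the three transitions ``skip $\pi_1[i_1]$'', ``skip $\pi_2[i_2]$'', or ``match both'' (each in constant time), giving $O(|\pi_1|^3|\pi_2|^3)$ states and time.

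There is, however, a genuine gap in your state definition. You write the table as $L(i,j,v_1,v_2)$ with a \emph{single} value constraint $v_k$ per permutation, and count $O(|\pi_1|^2|\pi_2|^2)$ states. But after you have placed at least one ascent and one descent letter, the admissible region in each $\pi_k$ is an interval with \emph{two} independent endpoints; neither endpoint is recoverable from the suffix index $i$ (or $j$) alone. With $i,j$ interpreted as ``start of the remaining suffix'' you must carry both $\lb_k$ and $\ub_k$, exactly as the paper does. (A 4-parameter state \emph{can} be made to work if $i_k$ is instead the position of the last matched element, since then one bound equals $\pi_k[i_k]\pm 1$; but you would need to say this explicitly and handle the initial state.) As written, $L(i,j,v_1,v_2)$ does not contain enough information to decide correctness of the next step.

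A smaller point: Lemma~\ref{lemma:MatchStripeToPermutation} and Corollary~\ref{corollary:max is ascent} are stated for a $(213,231)$-avoiding \emph{text} $\pi$, whereas here $\pi_1,\pi_2$ are arbitrary. What you actually need is only that $\bijection$ is a bijection on $\AV_k(213,231)$, so that two $(213,231)$-avoiding subsequences with the same ascent/descent word are order-isomorphic. That is true and is what underlies the lemma, but the lemma itself does not apply to arbitrary $\pi_1,\pi_2$; you should invoke the bijectivity of $\bijection$ directly rather than those results.
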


\begin{proof}
Consider the following problem
that computes the longest stripe common to $\pi_1$ and $\pi_2$:
Given two permutations $\pi_1$ and $\pi_2$, we define
$
\LCS{\pi_1}{\lb_1}{\ub_1}{\pi_2}{\lb_2}{\ub_2}{i_1}{i_2}
$
\begin{center}
= $\max$ \{$|s|$ $|$ $s$ can be matched $\pi_1[i_1:]$ with every element of the match in $[\lb_1,\ub_1]$ and $s$ can be matched $\pi_2[i_2:]$ with every element of the match in $[\lb_2,\ub_2]$ \}
\end{center}

We show how to solve this problem by dynamic programming.

\noindent\textbf{BASE:}
$$
\LCS{\pi_1}{\lb_1}{\ub_1}{\pi_2}{\lb_2}{\ub_2}{|\pi_1|}{|\pi_2|} =
\begin{cases}
	1 & \text{if $\lb_1 \leq \pi_1[j] \leq \ub_1$
	}\\
	& \text{ and $\lb_2 \leq \pi_2[j] \leq \ub_2$}\\
	0 & otherwise\\
\end{cases}
$$

\noindent\textbf{STEP:}
$$
\LCS{\pi_1}{\lb_1}{\ub_1}{\pi_2}{\lb_2}{\ub_2}{i_1}{i_2}=max
\begin{cases}
	\LCS{\pi_1}{\lb_1}{\ub_1}{\pi_2}{\lb_2}{\ub_2}{i_1}{i_2+1} \\
	\\
	\LCS{\pi_1}{\lb_1}{\ub_1}{\pi_2}{\lb_2}{\ub_2}{i_1+1}{i_2} \\
	\\
	\match{\pi_1}{\lb_1}{\ub_1}{\pi_2}{\lb_2}{\ub_2}{i_1}{i_2}
\end{cases}
$$

with \\
$
\match{\pi_1}{\lb_1}{\ub_1}{\pi_2}{\lb_2}{\ub_2}{i_1}{i_2}=
\begin{cases}
1+\LCS{\pi_1}{\pi_1[i_1]+1}{\ub_1}{\pi_2}{\pi_2[i_2]+1}{\ub_2}{i_1}{i_2+1}
	& \text{$\pi_1[i_1]<\lb_1$ } \\
	& \text{and $\pi_2[i_2]<\lb_2$} \\

&\\

1+\LCS{\pi_1}{\lb_1}{\pi_1[i_1]-1}{\pi_2}{\lb_2}{\pi_2[i_2]-1}{i_1+1}{i_2}
	& \text{$\pi_1[i_1]>\ub_1$ } \\
	&\text{and $\pi_2[i_2]>\ub_2$}\\

&\\

0 	& \text{otherwise}\\
\end{cases}
$

For every pair $i,j$ we either ignore the element of $\pi_1$,
or we ignore the element of $\pi_2$,
or we match as the same step (if possible).
These relations lead to a $O(|\pi_1|^3|\pi_2|^3)$ time and $O(|\pi_1|^3|\pi_2|^3)$ space algorithm.
Indeed there is $|\pi_1|^3|\pi_2|^3$ cases possible for the problem
and each case is solved in constant time.
$\qed$
\end{proof}

\bibliography{bibli}{}
\bibliographystyle{plain}

%%%%%%%%%%%%%%%%%%%%%%%%%%%%%%%%%%%%%%%%%%%%%%%%%%%%%%%%%%%%%%%%%%%%%%%%%%%%%%%

\end{document}